\author{
  Jean-Guillaume~Dumas\thanks{%
    Universit\'e Grenoble Alpes,
    CNRS, LJK, 700 av. centrale, IMAG - CS 40700, 38058 Grenoble
    cedex 9, France, 
    \href{mailto:Jean-Guillaume.Dumas@imag.fr}{Jean-Guillaume.Dumas@imag.fr}%
    .}
  \and
  Vincent~Zucca\thanks{%
    Sorbonne Universit\'es, 
    Univ. Pierre et Marie Curie,
    Laboratoire LIP6, CNRS, umr 7606,
    4 place Jussieu, F75252 Paris, France, 
    \href{mailto:Vincent.Zucca@lip6.fr}{Vincent.Zucca@lip6.fr}%
    .}   
}
\newcommand{\sigsmall}{}
\newcommand{\sighspace}[1]{}
\newcommand{\sigvspace}[1]{}
\newcommand{\F}{\ensuremath{\mathbb{F}}}
\let\oldparagraph\paragraph
\renewcommand{\paragraph}[1]{\oldparagraph{#1.}}
\newtheorem{theorem}{Theorem}
\newtheorem{definition}[theorem]{Definition}
\newtheorem{lemma}[theorem]{Lemma}
\newtheorem{corollary}[theorem]{Corollary}
\newtheorem{remark}[theorem]{Remark}
\newcommand{\OpenDreamKit}{the \href{http://opendreamkit.org}{OpenDreamKit} \href{https://ec.europa.eu/programmes/horizon2020/}{Horizon 2020} \href{https://ec.europa.eu/programmes/horizon2020/en/h2020-section/european-research-infrastructures-including-e-infrastructures}{European Research Infrastructures} project (\#\href{http://cordis.europa.eu/project/rcn/198334_en.html}{676541})}
\title{Prover efficient public verification of dense or sparse/structured
  matrix-vector multiplication\thanks{This work is partly funded by
    \OpenDreamKit.}}
\newcommand{\checks}{\ensuremath{\mathrel{\stackrel{?}{=\!=}}}}
\newcommand{\Z}{\ensuremath{\mathbb{Z}}}
\newcommand{\G}{\ensuremath{\mathbb{G}}}
\newcommand{\bigO}[1]{\ensuremath{{\mathcal O}\left(#1\right)}}
\newcommand{\bigTheta}[1]{\ensuremath{{\Theta}\left(#1\right)}}
\begin{document}

\onecolumn\maketitle
\begin{abstract}
With the emergence of cloud computing services, computationally weak devices
(Clients) can delegate expensive tasks to more powerful entities (Servers). This
raises the question of verifying a result at a lower cost than that
of recomputing it. This verification can be private, between the Client and the
Server, or public, when the result can be verified by any third party. 
We here present protocols for the verification of matrix-vector
multiplications, that are secure against malicious Servers. 
The obtained algorithms are essentially optimal in the amortized model: the
overhead for the Server is limited to a very small constant factor, even in the
sparse or structured matrix case;
and the computational time for the public Verifier is linear in the dimension.
Our protocols combine probabilistic checks and cryptographic operations, but
minimize the latter to preserve practical efficiency. Therefore our
protocols are overall more than two orders of magnitude faster than
existing ones.
\end{abstract}

\section{Introduction}
With the emergence of cloud computing services, computationally weak devices
(Clients, such as smart phones or tablets) can delegate expensive tasks to more
powerful entities (Servers). 
Such heavy tasks can, e.g., be cryptographic operations, image manipulation or
statistical analysis of large data-sets.
This raises the question of verifying a result at a lower cost than that of
recomputing it. This verification can be private, between the Client and the
Server, or public, when the result can be verified by any third party.  

For instance within computer graphics (image compression and geometric
transformation), graph theory (studying properties of large networks), big data
analysis, one deals with linear transformations of large amount of data, often
arranged in large matrices with large dimensions that are in the order of
thousands or millions in some applications. 
Since a linear transformation on a vector $x$ can be
expressed by a matrix-vector multiplication (with a matrix of size
$m{\times}n$), a weak client can use one of the protocols in the literature
~\cite{Fiore:2012:PVD,Blanton:2014:matrix,Elkhiyaoui:2016:ETPV} to outsource and
verify this computation in the optimal time $O(m+n)$, i.e., linear in the input
and the output size.
However as these protocols use expensive cryptographic operations, such as
pairings, the constants hidden in the asymptotic complexity are usually
extremely large~\cite{Walfish:2015:VCWRT}.

In this paper, we propose an alternative protocol, achieving the same optimal
behavior, but which is also practical: the overhead for the Prover is now
very close to the time required to compute the matrix-vector
multiplication, thus gaining two orders of magnitude with respect to the
literature.
Our protocol not only does this for dense matrices, but is also sensitive to any
structure or sparsity of the linear transformation. 
For this, we first remove any quadratic operation that is not a
matrix-vector multiplication (that is we use projections and rank-1 updates) and
second we separate operations in the base field from cryptographic operations
so as to minimize the latter. 

More precisely, we first combine rank-one updates of~\cite{Fiore:2012:PVD} and
the projecting idea of~\cite{Elkhiyaoui:2016:ETPV} with Freivalds' probabilistic
check~\cite{Freivalds:1979:certif}. Second, we use a novel strategy of
vectorization. 
For instance, with a security parameter $s$ (e.g., an
$s=128$-bits equivalent security), exponentiations or pairings operations
usually cost about $O(s^3)$ arithmetic operations.
To make the whole protocol work practical, we thus reduce
its cost from $\bigO{s^3mn}$ to $\bigO{\mu(A)+s^3(m+n^{4/3})}$, where 
$\mu(A)<2mn$ is the cost of one, potentially structured, matrix-vector
multiplication. We also similarly reduce the work of the Verifier.
This allows us to gain two orders of magnitude on the Prover's work and
therefore on the overall costs of outsourcing, 
while preserving and sometimes even improving the practical efficiency of the
Verifier.

Thus, after some background in Section~\ref{sec:background}, 
our first
improvement is given in a relaxed public verification setting in
Section~\ref{sec:intfg} via matrix projection and probabilistic checks. 
Our second improvement is given in Section~\ref{sec:bootstrap} where the
verification is bootstrapped efficiently by vectorization. 
We then show how to combine all improvements in
Section~\ref{sec:fully} in order to obtain a complete and provably secure protocol. 
Finally, we show in Section~\ref{sec:expes} that our novel protocol indeed
induces a global overhead factor lower than $3$ with respect to non verified
computations. This is gaining several orders of magnitude on the Prover side 
with respect to previously known protocols, while keeping the Verification step
an order of magnitude faster.

\section{Background and definitions}\label{sec:background}
In this paper, we want to be able to prove fast that a vector is a solution to a
linear system, or equivalently that a vector is the product of another vector by
a matrix. 
This is useful, e.g., to perform some statistical analysis on some medical data.
We distinguish the matrix, a static data, from the vectors which
are potentially diverse. In the following, $\F_p$ will denote a prime field and
we consider: 
\begin{compactitem}
\item Data: matrix $A\in\F_p^{m{\times}n}$.
\item Input: one or several vectors $x_i\in\F_p^n$, for $i=1..k$.
\item Output: one or several vectors $y_i=Ax_i\in\F_p^m$, for $i=1..k$.
\end{compactitem}
Then, we denote by $\star{}$ an operation performed in the exponents (for
instance, for $u\in\G^n$ and $v\in\Z^n$, the operation $u^T\star{}v$ actually denotes
$\prod_{j=1}^n u[j]^{v[j]}$).

\paragraph{Publicly Verifiable Computation}
A publicly verifiable computation scheme, in the formal setting
of~\cite{Parno:2012:delegate}, is in fact four algorithms 
(\emph{KeyGen}, \emph{ProbGen}, \emph{Compute}, \emph{Verify}),
where \emph{KeyGen} is some (amortized) preparation of the data, \emph{ProbGen}
is the preparation of the input, \emph{Compute} is the work of the \emph{Prover}
and \emph{Verify} is the work of the \emph{Verifier}. 
Usually the Verifier also executes \emph{KeyGen} and \emph{ProbGen} but in a
more general setting these can be performed by different entities (respectively
called a \emph{Preparator} and a \emph{Trustee}). More formally we define these
algorithms as follow:

\begin{compactitem}
\item \emph{KeyGen}$(1^\lambda,f)\rightarrow(\texttt{param}, EK_f, VK_f)$: a randomized algorithm
  run by a \emph{Preparator}, it takes as input a security parameters $1^\lambda$ and the function~$f$ to be outsourced. It outputs public parameters $\texttt{param}$ which will be used by the three remaining algorithms, an evaluation key $EK_f$ and a verification key $VK_f$.
\item \emph{ProbGen}$(x)\rightarrow(\sigma_x)$: a randomized algorithm run by a \emph{Trustee} which takes as input an element $x$ in the domain of the outsourced function $f$. It returns $\sigma_x$, an encoded version of the input $x$.
\item \emph{Compute}$(\sigma_x,EK_f)\rightarrow(\sigma_y)$: an algorithm run by the \emph{Prover} to compute an encoded version $\sigma_y$ of the output $y=f(x)$ given the encoded input $\sigma_x$ and the evaluation key $EK_x$.
\item \emph{Verify}$(\sigma_y,VK_f)\rightarrow y \text{ or} \perp$: given the encoded output $\sigma_y$ and the verification key $VK_f$, the $\emph{Verifier}$ runs this algorithm to determine whether $y=f(x)$ or not. If the verification passes it returns $y$ otherwise it returns an error $\perp$.   \end{compactitem}

	\paragraph{Completeness} A publicly verifiable computation scheme for a family of function $\mathcal{F}$ is considered to be \textit{perfectly complete} (or \textit{correct}) if for every function belonging to $\mathcal{F}$ and for every input in the function domain, an honest \emph{Prover} which runs faithfully the algorithm $\emph{Compute}$ will \textit{always} (with probability 1) output an encoding $\sigma_y$ which will pass \emph{Verify}.

\paragraph{Soundness} A publicly verifiable computation scheme for a family of function $\mathcal{F}$ is called \textit{sound} when a prover cannot convince a verifier to accept a wrong result $y'\neq y$ except with negligible probability. More formally we evaluate the capability of an adversary $\mathcal{A}$ to deceive the verifier through a \textit{soundness experiment}.
In this experiment, we assume that the adversary $\mathcal{A}$ accesses to the output of the algorithm $\emph{KeyGen}$ by calling an oracle $\mathcal{O}_{\text{\emph{KeyGen}}}$ with inputs $1^\lambda$ and the function to evaluate $f$. This oracle $\mathcal{O}_{\text{\emph{KeyGen}}}$ returns public parameters for the protocol $\texttt{param}$, an evaluation key $EK_f$ and a verification key $VK_f$. Afterwards the adversary $\mathcal{A}$ sends its challenge input $x$ to an oracle $\mathcal{O}_{\text{\emph{ProbGen}}}$ which returns $\sigma_{x}$. Finally $\mathcal{A}$ outputs an encoding $\sigma_{y^*}\neq \sigma_y$ and runs the \emph{Verify} algorithm on inputs $\sigma_{y^*}$ and $VK_{f}$, whether it outputs $y$ or $\perp$ the experiment has either succeeded or failed.

\begin{definition}
  A publicly verifiable computation scheme for a family of function $\mathcal{F}$ is sound if and only if for any polynomially bounded adversary $\mathcal{A}$ and for any $f$ in $\mathcal{F}$ the probability that $\mathcal{A}$ succeeds in the soundness experiment is negligible in the security parameter.
\end{definition}

\paragraph{Adversary model}
The protocol in~\cite{Fiore:2012:PVD} (recalled for the sake of completeness in Appendix~\ref{sec:fg})
is secure against a \emph{malicious Server
only}. That is the Client must trust both the Preparator and the Trustee.
We will stick to this model of attacker in the remaining of this paper.
 Otherwise some attacks can be mounted:
 \begin{compactitem}
 \item Attack with a \emph{Malicious Preparator only}: send $A'$ and a correctly
   associated $W'$ to the Server, but pretend that $A$ is used. Then, all
   verifications do pass, but for $y'=A'x$ and not $y=Ax$.
 \item Attack with a \emph{Malicious Trustee only}: there, a malicious
   assistant can provide a wrong $\text{VK}_x$ making the verification fail even
   if the Server and Preparator are honest and correct. 
\item Attack with \emph{Malicious Server and Trustee}: 
  the Server sends any $y'$ and any $z'$ to the Trustee, who computes 
  $\text{VK'}_x[i]=e(z'[i];g_2)/a^{y'[i]}$, that will match the verification
  with $y'$ and $z'$.
 \end{compactitem}

\paragraph{Public delegatability} 
One can also further impose that there is not interaction between the Client and
the Trustee after the Client has sent his input to the Server. Publicly
verifiable protocols with this property are said to be publicly
delegatable~\cite{Elkhiyaoui:2016:ETPV}. The protocol in~\cite{Fiore:2012:PVD}
does not achieve this property, but some variants
in~\cite{Blanton:2014:matrix,Elkhiyaoui:2016:ETPV} already can. 

\paragraph{Bilinear Pairings}
The protocols we present in this paper use bilinear pairings and their security is based on the co-CDH assumption, for the sake of completeness we recall hereafter these definitions.  
\begin{definition}[\bf{bilinear pairing}]\ \\
  Let $\G_1$, $\G_2$ and $\G_T$ be three groups of prime order $p$, a bilinear
  pairing is a map $e:\G_1\times \G_2 \rightarrow \G_T $ with the following
  properties: 
  \begin{compactenum}
  \item {\em bilinearity}:
    $\forall{}a,b\in\F_p,~\forall{}(g_1,g_2)\in\G_1\times\G_2,~e(g_1^a,g_2^b)=e(g_1,g_2)^{ab}$; 
  \item {\em non-degeneracy}: if $g_1$ and $g_2$ are generators of $\G_1$ and
    $\G_2$ respectively then $e(g_1,g_2)$ is a generator of $\G_T$;
  \item {\em computability}: $\forall (g_1,g_2)\in\G_1\times\G_2$, there exist an efficient algorithm to compute
    $e(g_1,g_2)$.
  \end{compactenum}
\end{definition}

\begin{definition}[\bf{co-CDH assumption}]\ \\
  Let $\G_1$, $\G_2$ and $\G_T$ be three groups of prime order $p$, such that
  there exist a bilinear map $e:\G_1\times \G_2\rightarrow \G_T$. Let
  $g_1\in\G_1$, $g_2\in\G_2$ be generators and $a$,
  $b\overset{\$}{\leftarrow}\F_p$ be chosen randomly. We say that the
  co-computational Diffie-Hellman assumption (\mbox{co-CDH}) holds in $\G_1$, if
  given 
  $g_1$, $g_2$, $g_1^a$, $g_2^b$ the probability to compute $g_1^{ab}$ is
  negligible.
\end{definition}

\paragraph{Related work}
The work of~\cite{Fiore:2012:PVD} introduced the idea of performing twice
the computations, once in the classical setting and once on encrypted values. This
enables the Client to only have to check consistency of both results. The
protocol is sound under the Decision Linear and \mbox{co-CDH} hypothesis. 
Then~\cite{Blanton:2014:matrix} extended the part on matrix-vector
multiplication to matrix-matrix while adding public delegatability. 
Finally, \cite{Elkhiyaoui:2016:ETPV} introduced the idea of projecting the
random additional matrix and the extra-computations of the server, which allows to
reduce the cost of the \emph{Verify} algorithm. It also decreases the size of the
verification key by a factor $m$.
For an $m{\times}n$ dense matrix, the protocol in~\cite{Fiore:2012:PVD} has a
constant time overhead for the Prover, but this constant is on the order of
cryptographic public-key operations like pairings. Similarly, the Verifier has
$\bigO{mn}$ cryptographic public-key pre-computations and $\bigO{n}$ of these
for the public verification. These cryptographic operations can then induce some
$10^6$ slow-down~\cite{Walfish:2015:VCWRT} and do no improve even if the initial
matrix is sparse or structured (as the rank one updates, $s\cdot t^T$ and
$\sigma\cdot{}\tau^T$, are always dense).

\section{A first step towards public verifiability}\label{sec:intfg} 
Freivalds' probabilistic verification of matrix
multiplications~\cite{Freivalds:1979:certif} allows for private verifiability of
matrix-vector computations. This can be naturally extended in the random oracle
model via Fiat-Shamir heuristic~\cite{Fiat:1986:Shamir}.
This however forces the vectors to be multiplied to be known in advance (the full
details are given in Appendix~\ref{sec:probaverif}), whereas our goal  
is instead to obtain public verifiability with an \emph{unbounded number of
  vector inputs}.
As an upstart, we thus first present an improvement if the public verification
model is slightly relaxed: in this Section, we allow the Trustee to perform some
operations after the computations of the Server. We will see in next sections
how to remove the need for the Trustee's intervention.
For this, we combine Freivalds projection (to check that $Ax_i=y_i$, one can first
precompute $w^T=u^TA$ and check that $w^Tx_i=u^Ty_i$) with Fiore \&
Gen\-na\-ro's protocol, in order to improve the running time of both the Trustee
and the Client: we let the Prover compute its projection in the group. That way
most of the pairings computations of the Trustee and Client are transformed to
classical operations: the improvement is from $\bigO{n}$ cryptographic
operations to $\bigO{n}$ classical operations and a single cryptographic one. 
Further, the projection can be performed beforehand, during the precomputation
phase. That way the preparation requires only one matrix-vector for the
Freivalds projection and the dense part is reduced to a single vector. 
The cryptographic operations can still be delayed till the last check on
pairings. This is shown in Figure~\ref{fig:spmv}.

\begin{figure}[htbp]\centering\sigvspace{-5pt}
\fbox{\begin{minipage}{0.95\textwidth}
\begin{compactitem}
\item Preparator: secret random $u\in\F_p^m$, $t\in\F_p^n$, then
  $\omega^T=g_1^{u^TA+t^T}\in\G_1^n$. 
\item Preparator to Prover: $A\in\F_p^{m{\times}n}$, $\omega\in\G_1^n$
\item Preparator to Trustee: $u$, $t$ in a secure channel.
\item Verifier to Prover: $x_i\in\F_p^n$
\item Prover to Verifier: $y_i\in\F_p^m$, $\zeta_i\in\G_1$ such that $y_i=Ax_i$
  and $\zeta_i=\omega^T\star{}x_i$.
\item Verifier to Trustee: $x_i$, $y_i$
\item Trustee to Verifier:
  $h_i=(u^T\cdot{}y_i)\in\F_p$ and $d_i=(t^T\cdot{}x_i)\in\F_p$, then send
  $\eta_i=e(g_1;g_2)^{h_i+d_i}\in\G_T$.
\item Verifier public verification:
  $e(\zeta_i;g_2)\checks{}\eta_i$ in $\G_T$.
\end{compactitem}
\end{minipage}}
\caption{Interactive protocol for Sparse-matrix vector multiplication
  verification under the \mbox{co-CDH}.}\label{fig:spmv}
\sigvspace{-15pt}\end{figure}

\begin{theorem}\label{thm:sparsecdh} The protocol of Figure~\ref{fig:spmv} is
  perfectly complete and sound under the co-Computational Diffie-Hellman
  assumption.
\end{theorem}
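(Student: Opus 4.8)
The plan is to treat the two assertions separately: perfect completeness by a direct computation in the exponents, and soundness by extracting from any accepting forgery a forbidden group element whose computation would contradict co-CDH.

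For completeness I would simply trace an honest execution. Since $\omega^T = g_1^{u^TA + t^T}$, the honest prover's projection is $\zeta_i = \omega^T\star x_i = g_1^{(u^TA+t^T)x_i} = g_1^{u^TAx_i + t^Tx_i}$, and because $y_i = Ax_i$ this equals $g_1^{u^Ty_i + t^Tx_i}$. The Trustee returns $\eta_i = e(g_1;g_2)^{h_i+d_i}$ with $h_i = u^Ty_i$ and $d_i = t^Tx_i$, so bilinearity gives $e(\zeta_i;g_2) = e(g_1;g_2)^{u^Ty_i+t^Tx_i} = \eta_i$, and the public check passes with probability $1$.

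For soundness, suppose a malicious Prover, on input $x$, returns $(y',\zeta')$ with $y'\neq Ax$ that nonetheless makes the public check succeed. Since the Trustee is honest it computes $\eta' = e(g_1;g_2)^{u^Ty' + t^Tx}$, so $e(\zeta';g_2) = \eta'$ forces, by non-degeneracy of $e$ (the fact that $e(g_1,g_2)$ generates $\G_T$), the equality $\zeta' = g_1^{u^Ty' + t^Tx}$. Comparing with the value $g_1^{u^TAx + t^Tx}$ that the honest projection $\omega^T\star x$ yields, I obtain $\zeta'/(\omega^T\star x) = g_1^{u^T(y'-Ax)} = g_1^{u^T\delta}$ where $\delta := y'-Ax\neq 0$. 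Thus any successful cheat is equivalent to producing, from the sole knowledge of $A$ and $\omega$, a valid image $g_1^{u^T\delta}$ of a nonzero vector under the secret projection $u$.

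The crux is then to reduce the computation of $g_1^{u^T\delta}$ to co-CDH. I would build a solver $\mathcal{B}$ that, on a challenge $(g_1,g_2,g_1^a,g_2^b)$, runs $\mathcal{A}$ while implicitly identifying the secret projection with the unknown $a$ and routing $b$ through the verification pairing, so that the extracted $g_1^{u^T\delta}$ delivers $g_1^{ab}$. Two facts should make the simulation go through: first, $\omega = g_1^{u^TA + t^T}$ is a one-time pad in the secret $t$, hence $\mathcal{B}$ can hand $\mathcal{A}$ a correctly distributed $\omega$ without knowing $u$ (equivalently, $\mathcal{A}$ gains no information on $u$ from the key); second, because $\delta\neq0$, the required $g_1^{u^T\delta}$ cannot be matched by any legitimate projection $\omega^T\star v = g_1^{(u^TA+t^T)v}$ formed with a known $v$ (that would force $Av=\delta$ and $v=0$ simultaneously), so a successful forgery genuinely extracts new information about $u$. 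I expect the main obstacle to be engineering this embedding so that $\mathcal{B}$'s view is perfectly consistent — in particular answering the single ProbGen query and simulating the Trustee's $\eta$ without knowing $a$ or $b$ — while still guaranteeing that the coordinate of $\delta$ carrying the planted challenge survives; handling the adaptive, unbounded-input aspect, where the same $u,t$ are reused across many vectors, is the delicate point and is precisely where co-CDH (rather than a purely information-theoretic bound) is invoked.
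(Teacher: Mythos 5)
Your completeness argument is exactly the paper's, and your first soundness step is sound and correctly reformulates the problem: by non-degeneracy, an accepting forgery $(y',\zeta')$ yields $\zeta'/(\omega^T\star x)=g_1^{u^T\delta}$ with $\delta=y'-Ax\neq 0$. But from that point on you only state a plan; the reduction itself — which is the entire content of the paper's soundness proof — is missing, and you explicitly flag its two hard steps (simulating the Trustee without the secrets, and making the planted coordinate survive in $\delta$) as unresolved obstacles. Moreover your stated embedding is off: you propose ``identifying the secret projection with the unknown $a$,'' but planting $a$ alone in $u$ is useless, since $g_1^{a\delta_1}$ is computable from the given $g_1^a$; the challenge must enter $u$ as the \emph{product} $ab$ (in the paper's notation $u^T=[cd,0,\dots,0]$), so that the extracted $g_1^{u^T\delta}$ is the forbidden element $g_1^{ab}$ up to a known nonzero scalar.

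The missing idea that makes the simulation work is that the reduction never needs $u$, nor any $\G_1$-encoding of $u$, but only its image in $\G_T$, which the pairing of the two challenge elements provides: set $z^T=[e(g_1^a;g_2^b),1,\dots,1]\in\G_T^m$, so that $\chi^T=z^T\star A=e(g_1;g_2)^{u^TA}$ is computable; publish a uniformly random $\omega^T=g_1^{\psi^T}$ (legitimate by your one-time-pad observation, implicitly defining $t^T=\psi^T-u^TA$), and set $\phi^T=e(\omega^T;g_2)/\chi^T=e(g_1;g_2)^{t^T}$. Then \emph{every} Trustee response, for arbitrarily many adaptive queries $(x_i,y_i)$ with possibly $y_i\neq Ax_i$, is simulated in $\G_T$ as $\eta_i=(z^T\star y_i)(\phi^T\star x_i)$, and the honest $\zeta_i=g_1^{\psi^Tx_i}$ is computable outright — this disposes of the unbounded-input issue you worry about. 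Finally, the surviving-coordinate problem is handled by normalization, not by luck: since the position of the planted value is information-theoretically hidden (the adversary's whole view is determined by $\psi$), one may assume after a random row permutation and a scaling that $\delta[1]=1$, whence $\zeta'/\zeta=g_1^{ab}$ exactly. (Alternatively, as the paper does for its full protocol, plant $ab\cdot u_0$ for a random secret $u_0$ and divide the exponent by $u_0^T\delta$, which is nonzero with probability $1-1/p$.) Without these constructions your argument establishes only that a cheat computes $g_1^{u^T\delta}$, not that this contradicts co-CDH.
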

The proof of Theorem~\ref{thm:sparsecdh} is given in
Appendix~\ref{ssec:proofsparse}.

\section{Verifying the dot-products by bootstrapping and
  vectorization}\label{sec:bootstrap}
To obtain public verifiability and public delegatability, the Client should
perform both dot-products, $u^T\cdot{}y$ and $t^T\cdot{}x$ (from now on, for the
sake of simplicity, we drop the indices on~$x$ and~$y$).
But as~$u$ and~$t$ must remain secret, they will be encrypted 
beforehand. To speed-up the Client computation, the idea is then to let the
Server perform the encrypted dot-products and to allow the Client to verify
them mostly with classical operations.

For this trade-off, we use vectorization. That is, for the vectors~$u$ and~$y$,
we form another representation as $\sqrt{m}{\times}\sqrt{m}$ matrices:  
\[U=\left[\begin{array}{ccc}
u_1&\ldots&u_{\sqrt{m}} \\
u_{1+\sqrt{m}}&\ldots&u_{2\sqrt{m}} \\
\ldots&\ldots&\ldots\\
u_{1+m-\sqrt{m}}&\ldots&u_{m} \\
\end{array}\right]
~~\text{and}~~
Y=\left[\begin{array}{ccc}
y_1&\ldots&y_{1+m-\sqrt{m}} \\
y_{2}&\ldots&y_{2+m-\sqrt{m}} \\
\ldots&\ldots&\ldots\\
y_{\sqrt{m}}&\ldots&y_{m} \\
\end{array}\right].\]
Then $u^T\cdot{}y=Trace(U Y)$. Computing with this representation is in general
slower than with the direct dot-product, $\bigO{\sqrt{m}^3}$ instead of
$\bigO{m}$. As shown next, this can be
circumvented with well-chosen left-hand sides and at least mitigated, with
unbalanced dimensions.

\subsection{dot-product with rank~1 left-hand side}

The first case is if~$u$ is of rank~$1$, that is if in matrix form, $u$~can be
represented by a rank one update, $U=\mu\cdot\eta^T$ for
$\mu,\eta\in\F_p^{\sqrt{m}}$. Then both representations require roughly the same
number of operations to perform a dot-product since then: 
\begin{equation}\label{eq:tracerk1}
Trace(\mu\cdot\eta^T\cdot{}Y)=\eta^T\cdot{}Y\cdot{}\mu
\end{equation}
Therefore, we let the Prover compute 
$z^T=g_1^{\eta^T}\star{}Y$, where $z[i]=g_1^{\sum \eta[j]Y[j,i]}$, and then the
Verifier can check this value via Freivalds with a random vector~$v$:
$g_1^{\eta^T}\star\left(Y\cdot{}v\right)\checks{}z^T\star{}v$.
The point is that the Verifier needs now $\bigO{m}$ operations to compute
$\left(Y\cdot{}v\right)$, but these are just classical operations over the
field. Then its 
remaining operations are cryptographic but there is only $\bigO{\sqrt{m}}$ of
these. 
Finally, the Verifier concludes the computation of the dotproduct, still with
cryptographic operations, but once again with only $\bigO{\sqrt{m}}$ of them.
Indeed, the dot product
$d=u^Ty=Trace(UY)=Trace(\mu\cdot\eta^T\cdot{}Y)=\eta^T\cdot{}Y\cdot{}\mu$
is checked by $e(g_1^d;g_2)=e(g_1;g_2)^{d}=\prod_{i=1}^{\sqrt{m}}
e(z[i];g_2^{\mu[i]})$ and the latter is
$e(g_1;g_2)^{\sum\sum\mu[i]\eta[j]Y[j,i]}=e(g_1;g_2)^{\eta^T\cdot{}Y\cdot{}\mu}$.

In practice, operations in a group can be slightly faster than pairings.
Moreover $\lceil\sqrt{m}\rceil^2$ can be quite far off $m$.
Therefore it might be interesting to use a non square vectorization
$b_1{\times}b_2$, as long as $b_1b_2\geq m$ and
$b_1+b_2=\bigTheta{\sqrt{m}}$ (and $0$~padding if needed). 
Then we have $U\in\F_p^{b_1{\times}b_2}$, $\mu\in\F_p^{b_1}$, $\eta\in\F_p^{b_2}$,
$Y\in\F_p^{b_2{\times}b_1}$ and $z\in\G_2^{b_1}$.
The obtained protocol can compute
$e(g_1;g_2)^{u^T\cdot{}y}$ with $\bigO{\sqrt{m}}$ cryptographic operations on
the Verifier side and is given in Figure~\ref{fig:rk1dp}.


\begin{lemma}\label{lem:rk1dp}
 The protocol of Figure~\ref{fig:rk1dp} for publicly delegation
 of a size $m$ external group dot-product verification with \mbox{rank-$1$} left hand
 side is sound, perfectly complete and requires the following number of
 operations where $b_1b_2\geq m$ and $b_1+b_2=\bigTheta{\sqrt{m}}$:
\begin{compactitem}
\item Preparation: $\bigO{b_1{+}b_2}$ in $\G_i$;
\item Prover: $\bigO{m}$ in $\G_i$;
\item Verifier: $\bigO{m}$ in $\F_p$, $\bigO{b_1{+}b_2}$ in $\G_i$ and $\bigO{b_1}$ pairings.
\end{compactitem}
\end{lemma}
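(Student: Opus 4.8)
The plan is to prove the three assertions of the lemma—perfect completeness, soundness, and the operation counts—in turn, treating the Freivalds test lifted to the group exponents as the only real content. Throughout I write $w^T=\eta^T Y\in\F_p^{b_1}$ for the vector of ``honest'' exponents, so that the honest Prover's message is $z[i]=g_1^{w[i]}$.

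\emph{Completeness.} First I would simply unfold an honest execution. The Freivalds test then reads $g_1^{\eta^T(Yv)}=g_1^{w^Tv}$ on the left and $(z)^T\star v=g_1^{w^Tv}$ on the right, so it holds identically for every choice of $v$. It remains to verify the algebraic identity of Equation~\eqref{eq:tracerk1}, namely $\mathrm{Trace}(\mu\cdot\eta^T\cdot Y)=\eta^T Y\mu=\sum_{i=1}^{b_1}\mu[i]\,w[i]$, from which $\prod_{i=1}^{b_1}e(z[i];g_2^{\mu[i]})=e(g_1;g_2)^{\sum_i\mu[i]w[i]}=e(g_1;g_2)^{\eta^T Y\mu}=e(g_1;g_2)^{u^Ty}$. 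Thus an honest Prover always passes the test and the Verifier always recovers the correct value, which is perfect completeness.

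\emph{Soundness.} The key observation is that, since $\G_1$ has prime order $p$, any message $z^\ast\in\G_1^{b_1}$ a malicious Prover may send is of the form $z^\ast[i]=g_1^{w^\ast[i]}$ for a unique (possibly unknown to the Prover) $w^\ast\in\F_p^{b_1}$. I would then reduce the group-level check to a scalar one: by the computation above, $g_1^{\eta^T(Yv)}\checks(z^\ast)^T\star v$ holds if and only if $w^Tv=(w^\ast)^Tv$, i.e. $(w-w^\ast)^Tv=0$ in $\F_p$. Provided the challenge $v$ is drawn uniformly at random \emph{after} $z^\ast$ is fixed, whenever $w\neq w^\ast$ the nonzero linear form $v\mapsto(w-w^\ast)^Tv$ vanishes on a hyperplane, so the test passes with probability exactly $1/p$. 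Finally I would connect this to the output: the Verifier's pairing product equals $e(g_1;g_2)^{(w^\ast)^T\mu}$, so a wrong output forces $(w^\ast-w)^T\mu\neq0$, hence $w^\ast\neq w$, hence rejection except with probability $1/p$. This gives soundness error $1/p$, which is negligible; note it is information-theoretic and does not invoke the co-CDH assumption.

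\emph{Operation counts and the main obstacle.} These follow by inspection: preparation forms the $b_2$ elements $g_1^{\eta[j]}$ and the $b_1$ elements $g_2^{\mu[i]}$, i.e. $\bigO{b_1+b_2}$ operations in $\G_i$; the Prover computes each $z[i]=\prod_{j=1}^{b_2}(g_1^{\eta[j]})^{Y[j,i]}$ for a total of $b_1b_2$ group operations, which is $\bigO{m}$ since $b_1+b_2=\bigTheta{\sqrt m}$ gives $b_1b_2\leq((b_1+b_2)/2)^2=\bigO{m}$ by AM--GM; and the Verifier spends $b_1b_2=\bigO{m}$ operations in $\F_p$ on $Yv$, $\bigO{b_1+b_2}$ operations in $\G_i$ on the two sides of the test, and $\bigO{b_1}$ pairings on the final product. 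The delicate point throughout is the soundness step: one must justify passing from an equality of group elements to an identity between their discrete logarithms and argue the Freivalds guarantee survives this lift. This rests on the prime order of $\G_1$ (so $w^\ast$ exists and is unique) and on the independence of $v$ from the committed $z^\ast$, which in the publicly delegatable setting is secured by deriving $v$ non-interactively from the transcript via a random oracle, as in the Fiat--Shamir heuristic used earlier in the paper.
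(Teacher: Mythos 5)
Your proposal is correct and follows essentially the same route as the paper's proof: completeness from the trace identity of Equation~(\ref{eq:tracerk1}), soundness from the Freivalds check, and the same term-by-term operation counts; you merely spell out in full the information-theoretic $1/p$ soundness bound (uniqueness of discrete logs in a prime-order group plus a nonzero linear form vanishing on a hyperplane) that the paper compresses into ``soundness is given by Freivalds check.'' One small correction: your closing appeal to Fiat--Shamir is unnecessary, since in Figure~\ref{fig:rk1dp} the challenge $v$ is sampled by the Verifier inside \emph{Verify}, i.e.\ after $z$ has been committed, so the independence your argument needs holds by construction.
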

\begin{proof} Correctness is ensured by Equation~(\ref{eq:tracerk1}). Soundness
  is 
  given by Freivalds check.
Complexity is as given in the Lemma: indeed, for the Verifier, we have: 
for $Y\cdot{}v$: $\bigO{b_2b_1}=\bigO{m}$ classic operations;
for $g_1^{\eta^T}\star(Yv)$: $\bigO{b_2}$ cryptographic (group) operations;
for $z^T\star{}v$: $\bigO{b_1}$ cryptographic (group) operations; 
and for $\displaystyle\prod_{i=1}^{b_1} e(z[i];g_2^{\mu[i]})$: $\bigO{b_1}$
  cryptographic (pairings) operations. 
Then the preparation requires to compute $g_1^\eta\in\G_1^{b_2}$ and $g_2^\mu\in\G_1^{b_1}$, 
while the Prover needs to compute $g_1^{\eta^T}\star{}Y$ for
$Y\in\F_p^{b_2{\times}b_1}$ and $b_1b_2=\bigO{m}$.
\end{proof}

\begin{figure}[!htb]\centering
\fbox{\begin{minipage}{0.95\textwidth}
  \begin{compactitem}
\item \emph{KeyGen}$(1^\lambda,\mu,\eta)$: given the security parameter $1^\lambda$ and vectors $\mu\in\mathbb{F}_p^{b_2}$ and $\eta\in\mathbb{F}_p^{b_1}$ such that $u = \mu\cdot\eta^t$, it selects two cyclic groups $\mathbb{G}_1$ and $\mathbb{G}_2$ of prime order $p$ that admit a bilinear pairing $e: \mathbb{G}_1\times  \mathbb{G}_2 \rightarrow  \mathbb{G}_T$ and generators $g_1$, $g_2$ and $g_T$ of the three groups. Finally it outputs $\texttt{params}=\{b_1,b_2,p,\mathbb{G}_1,\mathbb{G}_2,\mathbb{G}_T,e,g_1,g_2,g_T\}$ and $EK_f=(g_1^{\eta^T})$ and $VK_f=(g_2^\mu)$.
  
\item \emph{ProbGen}$(y)$: from $y\in\mathbb{F}_p^m$ it builds $Y\in\mathbb{F}_p^{b_1\times b_2}$ and outputs $\sigma_x=Y$.
  
\item \emph{Compute}$(\sigma_x,EK_f)$: compute $z^T=g_1^{\eta^T}\star Y$ and outputs $\sigma_y = (z^T)$. 
  
\item \emph{Verify}$(\sigma_y,VK_f)$: it starts by sampling randomly a vector $v\in\mathbb{F}_p^{b_2}$ then it checks whether $z^T\star v$ is equal to $g_1^{\eta^T}\star(Yv)$ or not. If the test passes it returns $\prod e(z[i];g_2^{\mu[i]})$ and if it fails it returns $\perp$. 
  \end{compactitem}
\end{minipage}}
  \caption{Publicly delegatable protocol for the dot-product in an external group
  with a rank-$1$ left hand side.}\label{fig:rk1dp}
\sigvspace{-15pt}\end{figure}

\subsection{Rectangular general dot-product}
Now if $u$ is not given by a rank $1$ update, one can still verify a dot-product
with only $\bigO{\sqrt{m}}$ pairings operations but as the price of slightly
more group operations as given in Figure~\ref{fig:gendp}.

\begin{figure}[!htb]\centering
\fbox{\begin{minipage}{0.95\textwidth}
  \begin{compactitem}
\item \emph{KeyGen}$(1^\lambda,u)$: given the security parameter $1^\lambda$ and vector $u\in\mathbb{F}_p^{m}$, it selects two cyclic groups $\mathbb{G}_1$ and $\mathbb{G}_2$ of prime order $p$ that admit a bilinear pairing $e: \mathbb{G}_1\times  \mathbb{G}_2 \rightarrow  \mathbb{G}_T$ and generators $g_1$, $g_2$ and $g_T$ of the three groups and also integers $b_1$, $b_2$ such that $m = b_1b_2$ and it outputs  $\texttt{params}=\{b_1,b_2,p,\mathbb{G}_1,\mathbb{G}_2,\mathbb{G}_T,e,g_1,g_2,g_T\}$. Then it samples a random $w\in\mathbb{F}_p^{b_1}$, creates $U\in\mathbb{F}_p^{b_1\times b_2}$ from $u$ and finally it outputs and $EK_f=(g_1^U)$ and $VK_f=(g_1^{w^T\cdot U},g_2^{\omega^T})$.
  
\item \emph{ProbGen}$(y)$: from $y\in\mathbb{F}_p^m$ it builds $Y\in\mathbb{F}_p^{b_2\times b_1}$ and outputs $\sigma_x=Y$.
  
\item \emph{Compute}$(\sigma_x,EK_f)$: compute $C=g_1^{U}\star Y$ and outputs $\sigma_y = (C)$. 
  
\item \emph{Verify}$(\sigma_y,VK_f)$: it starts by sampling randomly a vector $v\in\mathbb{F}_p^{b_1}$ then it computes $z=C\star v$ and checks whether $\prod e(z[i];g_2^{\omega[i]})$ is equal to $e(g_1^{w^T\cdot U} \star (Yv) ; g_2)$ or not. If the test passes it returns $Trace(C)$ and if it fails it returns $\perp$. 
  \end{compactitem}
\end{minipage}}
\caption{Publicly delegatable protocol for the external
  dot-product.}\label{fig:gendp}
\sigvspace{-5pt}\end{figure}


\begin{lemma}\label{lem:gendp}
 The protocol of Figure~\ref{fig:gendp} is sound, perfectly
  complete and requires the following number of operations with $b_1b_2\geq m$:
\begin{compactitem}
\item Preparation: $\bigO{m}$ in $\F_p$ and $\bigO{m}$ in $\G_i$;
\item Prover: $\bigO{mb_1}$ in $\G_1$;
\item Verifier: $\bigO{m}$ in $\F_p$, $\bigO{b_1^2{+}b_2}$ in $\G_i$ and
  $\bigO{b_1}$ pairings.
\end{compactitem}
\end{lemma}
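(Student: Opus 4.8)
The plan is to start from the vectorization identity, already used above in the square case: laying out $u$ into $U\in\F_p^{b_1\times b_2}$ row-wise and $y$ into $Y\in\F_p^{b_2\times b_1}$ column-wise (the layouts produced by \emph{KeyGen} and \emph{ProbGen}) gives $Trace(UY)=\sum_{i,j}U[i,j]Y[j,i]=u^T\cdot y$. Hence the honest output $C=g_1^U\star Y=g_1^{UY}$ satisfies $Trace(C)=\prod_i C[i,i]=g_1^{Trace(UY)}=g_1^{u^T\cdot y}$, so the value returned by \emph{Verify} is the desired external dot-product. I would then confirm that this $C$ passes the test: with $z=C\star v=g_1^{UYv}$ one gets $\prod_i e(z[i];g_2^{w[i]})=e(g_1;g_2)^{w^T UYv}$, while $e(g_1^{w^T U}\star(Yv);g_2)=e(g_1;g_2)^{w^T U Yv}$, and the two sides agree. (I read the key entry $g_2^{\omega^T}$ as $g_2^{w^T}$, the only choice making the check complete for every input.)

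\textbf{Soundness.} I would argue entirely at the level of exponents. Since $\G_1$ has prime order, any malicious output is $C=g_1^M$ for a unique $M\in\F_p^{b_1\times b_1}$; put $D=M-UY$. As in the honest case $z=C\star v=g_1^{Mv}$, so \emph{Verify} accepts iff $w^T M v=w^T UYv$, i.e. iff $w^T D\,v=0$, whereas the returned $g_1^{Trace(M)}$ is wrong iff $Trace(D)\neq 0$. Writing $r^T:=w^T D$, which is fixed once $C$ (hence $D$) is committed by \emph{Compute}, I split on $r$. If $r\neq 0$, then because the uniform $v$ is drawn afterwards, a Freivalds / Schwartz--Zippel bound on the single linear form $r^T v$ gives $\Pr_v[r^T v=0]=1/p$, which is negligible. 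The only surviving strategy is therefore $w^T D=0$ together with $Trace(D)\neq 0$.

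\textbf{Main obstacle.} Ruling out the case $w^T D=0,\ Trace(D)\neq 0$ under co-CDH is the crux, and it is exactly where the secrecy of $w$ is needed: publishing only the projection $w^T U$ (not $U$) is what keeps the Verifier's group work at $\bigO{b_2}$, but it also means the test constrains $w^T M$ rather than $M$, so a cheating $D$ must have every column in the hyperplane $w^\perp$. I would show that exhibiting such a nonzero, $w$-annihilated $D$ with nonzero trace furnishes a nontrivial $\F_p$-relation among the published exponents $g_2^{w}=(g_2^{w[1]},\dots,g_2^{w[b_1]})$, i.e. a discrete-log relation, contradicting co-CDH (which implies discrete-log hardness in $\G_2$, hence hardness of finding a kernel vector of a secret $w$ revealed only through $g_2^{w}$). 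The delicate point I expect to fight with is \emph{extraction}: the adversary returns group elements $C$, from which $g_1^{D}=C\cdot(g_1^{UY})^{-1}$ is computable entrywise, yet the explicit matrix $D$ (and thus the explicit relation) is not directly in hand. I would avoid a knowledge-of-exponent assumption by instead embedding the co-CDH/discrete-log challenge into the setup, placing the challenge instance inside $w$ and $U$ so that any accepting forgery lets the reduction read off the forbidden value.

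\textbf{Complexity.} Finally I would just count. Preparation forms $g_1^U$ ($b_1b_2=m$ exponentiations), computes $w^T U$ ($\bigO{m}$ field operations) and then $g_1^{w^T U}$, $g_2^{w}$ ($\bigO{b_1+b_2}$ exponentiations), giving $\bigO{m}$ in $\F_p$ and $\bigO{m}$ in $\G_i$. The Prover forms $C=g_1^{UY}$ as a $b_1\times b_1$ product of inner dimension $b_2$, i.e. $b_1^2 b_2=\bigO{m b_1}$ operations in $\G_1$. The Verifier computes $Yv$ ($\bigO{b_2 b_1}=\bigO{m}$ field operations), $z=C\star v$ ($\bigO{b_1^2}$ group operations), $g_1^{w^T U}\star(Yv)$ ($\bigO{b_2}$ group operations), and the $b_1$ pairings $\prod_i e(z[i];g_2^{w[i]})$, which matches $\bigO{m}$ in $\F_p$, $\bigO{b_1^2+b_2}$ in $\G_i$ and $\bigO{b_1}$ pairings.
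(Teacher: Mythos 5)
Your correctness and complexity arguments coincide with the paper's proof of this lemma: the paper likewise gets completeness from the trace/vectorization identity of Equation~(\ref{eq:tracerk1}), and its operation counts are exactly yours ($Yv$ in $\bigO{m}$ field operations, $z=C\star v$ in $\bigO{b_1^2}$ and $g_1^{w^TU}\star(Yv)$ in $\bigO{b_2}$ group operations, $b_1$ pairings for the Verifier; $\bigO{b_1^2b_2}=\bigO{mb_1}$ for the Prover; $\bigO{b_1b_2}=\bigO{m}$ field and $\bigO{m}$ group operations for the preparation). Your reading of the typo $g_2^{\omega^T}$ in Figure~\ref{fig:gendp} as $g_2^{w^T}$ is also clearly the intended one.

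The divergence is on soundness, and there you are considerably more careful than the paper, whose entire argument is the sentence ``Soundness is given by the Freivalds check.'' Your decomposition $D=M-UY$ makes precise what that sentence does and does not cover: the random $v$, drawn after $C$ is committed, handles the case $w^TD\neq 0$ except with probability $1/p$, but the test only constrains the projection $w^TM$, whereas the returned value $Trace(M)$ is not a function of that projection. Indeed, for $b_1\geq 2$ an adversary who knew $w$ would cheat with certainty (take $D=c\cdot e_j^T$ with $c\in w^\perp$ and $c[j]\neq 0$, so that $w^TD=0$ yet $Trace(D)\neq 0$), and $w$ is only computationally hidden behind $g_2^{w}$; hence soundness of this protocol genuinely cannot follow from the Freivalds check alone and must invoke a hardness assumption, exactly as you say. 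So the ``main obstacle'' you flag is a real gap---but it is a gap you share with, and have exposed in, the paper's own one-line proof rather than one you introduced. Your proposal does not close it: the extraction difficulty you name is genuine, since the adversary returns $g_1^{D}$ rather than $D$, so an accepting forgery does not hand the reduction an explicit linear relation on $w$, and the challenge-embedding simulation you sketch is left unexecuted. The closest the paper comes to carrying out that plan is the proof of Theorem~\ref{thm:provenproto} in Appendix~\ref{sec:protofig}, where the analogous check on $C$ is absorbed into a global co-CDH reduction by planting the challenge $(g_1^a,g_2^b)$ inside the published keys; within the proof of Lemma~\ref{lem:gendp} itself, the case $w^TD=0$, $Trace(D)\neq 0$ is simply never addressed.
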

\begin{proof} Correctness is ensured by the vectorization in
  Equation~(\ref{eq:tracerk1}).  
Soundness is given by the Freivalds check.
Complexity is as given in the Lemma: indeed, for the Verifier, we have: 
\begin{compactenum}
\item $Y\cdot{}v$: $\bigO{b_2b_1}=\bigO{m}$ classic operations;
\item $g_1^{w^TU}\star(Yv)$: $\bigO{b_2}$ cryptographic (group) operations;
\item $z=C\star{}v$: $\bigO{b_1^2}$ cryptographic (group) operations;
\item $\displaystyle\prod_{i=1}^{b_1} e(z[i];g_2^{w[i]})$: $\bigO{b_1}$
  cryptographic (pairings) operations; 
\end{compactenum}
Then the preparation requires to compute $w^T{\times}U$. This is
$\bigO{b_1b_2=m}$ operations. Finally, the Prover needs to compute the matrix
multiplication $g_1^{U}\star{}Y$ for $U\in\F_p^{b_1{\times}b_2}$ and
	$Y\in\F_p^{b_2{\times}b_1}$, in $\bigO{b_1^2b_2}=\bigO{mb_1}$.
\end{proof}

Therefore, one can take $b_1=\bigO{\sqrt[3]{m}}$ and $b_2=\bigO{m^{2/3}}$
which gives only $\bigO{m^{2/3}}$ cryptographic operations for the Verifier,
and $\bigO{m^{4/3}}$  cryptographic operations for the Prover.
Now a dot-product can be cut in $\frac{n}{k}$ chunks of size $k$. 
Then Each chunk can be checked with the protocol of Figure~\ref{fig:gendp} and
the final dot-product obtained by adding the chunks. This gives the complexity
of Corollary~\ref{cor:gendp}.

\begin{corollary}\label{cor:gendp}
There exist a protocol for the dot-product using:
$\bigO{n^{1-a/3}}$ cryptographic operations for the Verifier and 
$\bigO{n^{1+a/3}}$ cryptographic operations for the Prover, for any $0<a<1$.
\end{corollary}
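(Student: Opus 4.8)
The plan is to obtain the protocol by block decomposition, reducing a size-$n$ dot-product to $n/k$ independent invocations of the rectangular protocol of Figure~\ref{fig:gendp}, and then to optimize the block size $k$ against the claimed exponents.

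First I would partition $\{1,\dots,n\}$ into $n/k$ consecutive blocks of size $k$ and write $u^T\cdot y=\sum_{j=1}^{n/k} u_{(j)}^T\cdot y_{(j)}$, where $u_{(j)},y_{(j)}\in\F_p^{k}$ are the restrictions of $u$ and $y$ to the $j$-th block. On each block I would run Figure~\ref{fig:gendp} with the balanced vectorization $b_1=\bigTheta{k^{1/3}}$, $b_2=\bigTheta{k^{2/3}}$ (so that $b_1b_2=\bigTheta{k}$). By the discussion preceding the Corollary, each block then costs the Verifier $\bigO{b_1^2{+}b_2}=\bigO{k^{2/3}}$ cryptographic operations and the Prover $\bigO{kb_1}=\bigO{k^{4/3}}$ cryptographic operations (the $\bigO{k}$ field operations per block sum to $\bigO{n}$ and do not enter the claimed bounds). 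Since each invocation returns $Trace(C)=g_1^{u_{(j)}^T\cdot y_{(j)}}$, the Verifier reconstructs $g_1^{u^T\cdot y}$ simply by multiplying the $n/k$ block outputs.

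Summing over the $n/k$ blocks gives $\bigO{(n/k)\,k^{2/3}}=\bigO{n\,k^{-1/3}}$ cryptographic operations for the Verifier and $\bigO{(n/k)\,k^{4/3}}=\bigO{n\,k^{1/3}}$ for the Prover. Setting $k=n^{a}$ with $0<a<1$ yields $\bigO{n^{1-a/3}}$ and $\bigO{n^{1+a/3}}$ respectively, which is exactly the claimed statement.

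The step needing the most care is soundness under aggregation, rather than the complexity bookkeeping. Completeness is immediate: honest block computations each pass their Freivalds check and their exponents add to $u^T\cdot y$. For soundness, I would argue that any prover forcing the reconstructed value $g_1^{\sum_j d'_j}$ to differ from $g_1^{u^T\cdot y}$ must cheat on at least one block, i.e.\ produce some $d'_j\neq u_{(j)}^T\cdot y_{(j)}$; by Lemma~\ref{lem:gendp} each such deviation is caught except with negligible probability, and a union bound over the $n/k=\bigO{n}$ blocks (polynomially many) keeps the total error negligible in the security parameter. The only subtlety is to ensure this survives running the checks together: since each block draws an independent Freivalds vector~$v$, the per-block events remain independent and the union bound applies directly.
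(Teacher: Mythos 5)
Your proposal is correct and takes essentially the same approach as the paper: cut the dot-product into $n/k$ chunks, run the protocol of Figure~\ref{fig:gendp} on each chunk with $b_1=\bigTheta{k^{1/3}}$, $b_2=\bigTheta{k^{2/3}}$, sum the costs $(n/k)\cdot k^{4/3}$ (Prover) and $(n/k)\cdot k^{2/3}$ (Verifier), and set $k=n^a$. Your added union-bound argument for soundness under aggregation is a detail the paper leaves implicit, but it does not change the substance of the proof.
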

\begin{proof}
We let $k=n^{a}$ and use $n/k$ times the protocol of the previous point
with $b_1=\sqrt[3]{k}$ and $b_2=k^{\frac{2}{3}}$. Overall this gives
$n/k(k^{4/3})$ for the Prover and $n/k(k^{2/3})$ for the Verifier.
\end{proof}

\section{Public delegatability via bootstrapping}\label{sec:fully} 
To recover the public delegatability model, we use the protocol of
Figure~\ref{fig:spmv} but we trade back some
cryptographic operations using the protocol of Figure~\ref{fig:rk1dp} to the
Verifier. With an initial
matrix $A\in\F_p^{m{\times}n}$ we however trade back only on the order of
$\bigO{\sqrt{m}+\sqrt{n}}$ cryptographic operations.
This gives a slower verification in practice but interaction
is not needed anymore.
We present our full novel protocol for matrix vector product in
Figure~\ref{fig:matrix-vector} (with the flow of exchanges shown in
Figure~\ref{fig:provprot}, Appendix~\ref{sec:protofig}).
\begin{figure}[!htb]
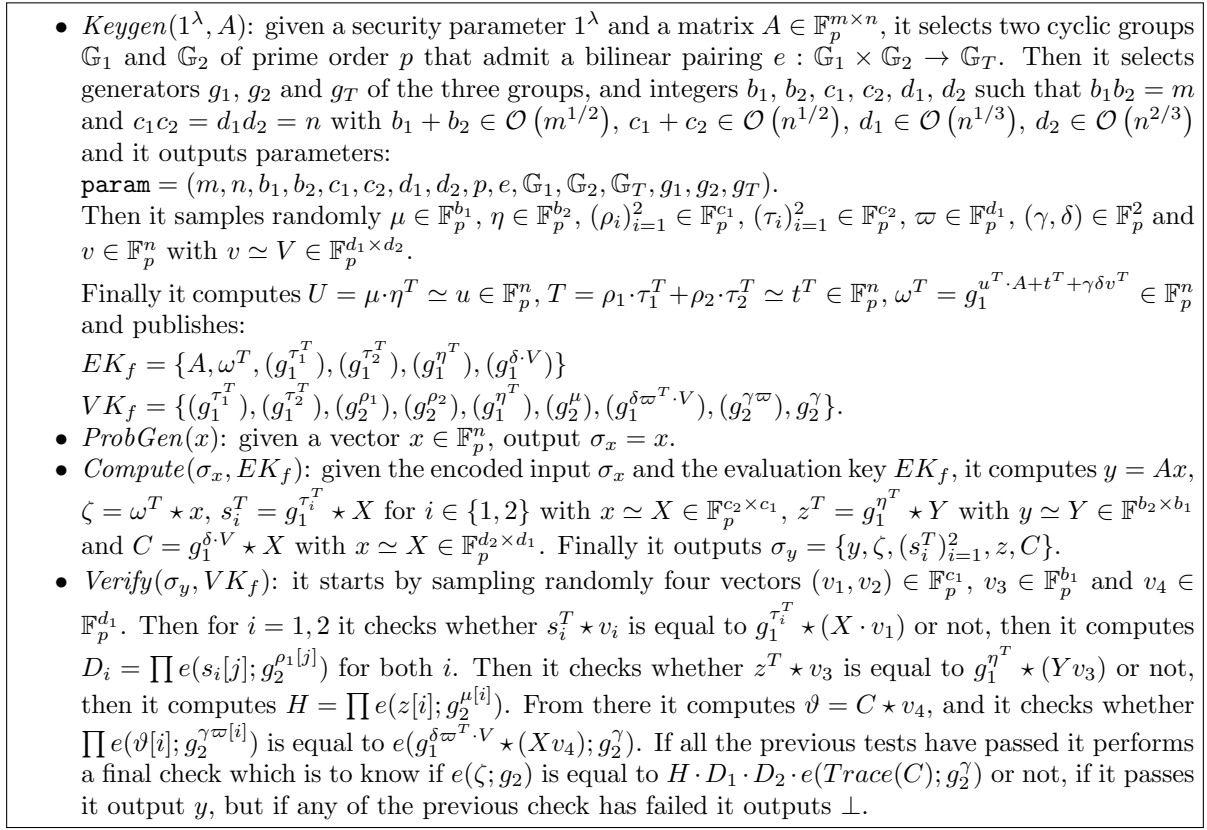
\centering
\fbox{\begin{minipage}{0.95\textwidth}
\begin{compactitem}
\item \emph{Keygen}$(1^\lambda,A)$: given a security parameter $1^\lambda$ and a matrix $A\in\mathbb{F}_p^{m\times n}$, it selects two cyclic groups $\mathbb{G}_1$ and $\mathbb{G}_2$ of prime order $p$ that admit a bilinear pairing $e: \mathbb{G}_1\times  \mathbb{G}_2 \rightarrow  \mathbb{G}_T$. Then it selects generators $g_1$, $g_2$ and $g_T$ of the three groups, and integers $b_1$, $b_2$, $c_1$, $c_2$, $d_1$, $d_2$ such that $b_1b_2=m$ and $c_1c_2=d_1d_2=n$ with $b_1+b_2\in\bigO{m^{1/2}}$, $c_1+c_2\in\bigO{n^{1/2}}$, $d_1\in\bigO{n^{1/3}}$, $d_2\in\bigO{n^{2/3}}$ and it outputs parameters:

$\texttt{param} = (m,n,b_1,b_2,c_1,c_2,d_1,d_2,p,e, \mathbb{G}_1, \mathbb{G}_2, \mathbb{G}_T,g_1,g_2,g_T)$.

Then it samples randomly $\mu\in\mathbb{F}_p^{b_1}$, $\eta\in\mathbb{F}_p^{b_2}$, $(\rho_i)_{i=1}^2\in\mathbb{F}_p^{c_1}$, $(\tau_i)_{i=1}^2\in\mathbb{F}_p^{c_2}$, $\varpi\in\mathbb{F}_p^{d_1}$, $(\gamma, \delta)\in\mathbb{F}_p^2$ and $v\in\mathbb{F}_p^n$ with $v\simeq V\in\mathbb{F}_p^{d_1\times d_2}$.

Finally it computes $U = \mu\cdot\eta^T\simeq u\in\mathbb{F}_p^{n}$, $T=\rho_1\cdot\tau_1^T+\rho_2\cdot\tau_2^T\simeq t^T\in\mathbb{F}_p^n$, $\omega^T=g_1^{u^T\cdot A+t^T+\gamma\delta v^T}\in\mathbb{F}_p^n$ and publishes:

$EK_f=\{A,\omega^T,(g_1^{\tau_1^T}),(g_1^{\tau_2^T}),(g_1^{\eta^T}),(g_1^{\delta\cdot V})\}$

$VK_f=\{(g_1^{\tau_1^T}),(g_1^{\tau_2^T}),(g_2^{\rho_1}),(g_2^{\rho_2}),(g_1^{\eta^T}),(g_2^\mu),(g_1^{\delta\varpi^T\cdot V}),(g_2^{\gamma\varpi}),g_2^\gamma\}$.

\item \emph{ProbGen}$(x)$: given a vector $x\in\mathbb{F}_p^n$, output $\sigma_x = x$.

\item\emph{Compute}$(\sigma_x,EK_f)$: given the encoded input $\sigma_x$ and the evaluation key $EK_f$, it computes $y=Ax$, $\zeta=\omega^T\star x$, $s_i^T=g_1^{\tau_i^T}\star X$ for $i\in\{1,2\}$ with $x\simeq X\in\mathbb{F}_p^{c_2\times c_1}$, $z^T=g_1^{\eta^T}\star Y$ with $y\simeq Y\in\mathbb{F}^{b_2\times b_1}$ and $C=g_1^{\delta\cdot V}\star X$ with $x\simeq X\in\mathbb{F}_p^{d_2\times d_1}$. Finally it outputs $\sigma_y=\{y,\zeta,(s_i^T)_{i=1}^2,z,C\}$.

\item\emph{Verify}$(\sigma_y,VK_f)$: it starts by sampling randomly four vectors $(v_1, v_2)\in\mathbb{F}_p^{c_1}$, $v_3\in\mathbb{F}_p^{b_1}$ and $v_4\in\mathbb{F}_p^{d_1}$. Then for $i=1,2$ it checks whether $s_i^T\star v_i$ is equal to $g_1^{\tau_i^T}\star (X\cdot v_1)$ or not, then it computes $D_i=\prod e(s_i[j];g_2^{\rho_1[j]})$ for both $i$. Then it checks whether $z^T\star v_3$ is equal to $g_1^{\eta^T}\star (Yv_3)$ or not, then it computes $H=\prod e(z[i];g_2^{\mu[i]})$. From there it computes $\vartheta=C\star v_4$, and it checks whether $\prod e(\vartheta[i];g_2^{\gamma \varpi[i]})$ is equal to $e(g_1^{\delta\varpi^T\cdot V}\star(Xv_4);g_2^{\gamma})$. If all the previous tests have passed it performs a final check which is to know if $e(\zeta;g_2)$ is equal to $H\cdot D_1\cdot D_2\cdot e(Trace(C);g_2^\gamma)$ or not, if it passes it output $y$, but if any of the previous check has failed it outputs $\perp$. 
\end{compactitem}
\end{minipage}}
\caption{\label{fig:matrix-vector}Proven publicly delegatable protocol for matrix-vector product}
\sigvspace{-15pt}\end{figure}
Apart from Freivalds's checks and vectorization, we need to use a masking of the
form $u^TA+t^T$ (see Figure~\ref{fig:spmv}) indistinguishable from a random
distribution, but:
\begin{compactenum}
\item We have to add an extra component~$\gamma\delta v^T$ to~$u^TA+t^T$ so that
  it is possible, when proving the reduction to \mbox{co-CDH}, to make up a
  random vector $\omega^T=g^{u^TA+t}$ where the components of $u^TA+t^T$ are
  canceled out. 
  This component cannot be revealed to the Prover nor the Verifier, otherwise its
  special structure could have been taken into account by the reduction. Also
  this component cannot have the rank-$1$ update structure as its has to be a
  multiple of $u^TA+t^T$. Therefore only the protocol of Figure~\ref{fig:gendp}
  can be used to check the dotproduct with~$g^{v^T}$. 
\item To be able to apply the analysis of~\cite[Theorem~3]{Fiore:2012:PVD} while
  allowing fast computations with $t$, we use a special form for $t$, namely: 
  $t^T=\rho^{}_1\tau_{\mathstrut 1}^T+\rho^{}_2\tau_2^T$.
\end{compactenum}
With these modifications we are able to prove the soundness of the protocol in Figure~\ref{fig:matrix-vector}.
\begin{theorem}\label{thm:provenproto}
  Let $A\in\F_p^{m{\times}n}$ whose matrix-vector
  products costs $\mu(A)$ arithmetic operations.
  Protocol~\ref{fig:matrix-vector} is sound under the \mbox{co-CDH} assumption,
  perfectly complete and its number of performed operations is bounded as
  follows: 
\begin{center}\sigsmall
\begin{tabular}{|c||c|c|c|}
\hline
& Preparation  & Prover  & Verifier \\
\hline
$\F_p$  &  $\mu(A){+}\bigO{m{+}n}$  &  $\mu(A)$  &  $\bigO{m{+}n}$ \\
$\G_i$  &  $\bigO{m{+}n}$  &  $\bigO{m{+}n^{4/3}}$  &  $\bigO{\sqrt{m}{+}n^{2/3}}$ \\
Pairings &  0 &  0  &  $\bigO{\sqrt{m}{+}\sqrt{n}}$  \\
\hline
\end{tabular} 
\end{center}
\end{theorem}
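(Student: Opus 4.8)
The plan is to prove the three assertions of Theorem~\ref{thm:provenproto} in increasing order of difficulty: perfect completeness and the operation counts by direct inspection, and soundness by a two-stage argument (a statistical Freivalds stage followed by a computational reduction to \mbox{co-CDH}), with essentially all the real work in the last stage.

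For completeness I would trace each pairing identity on an honest transcript. Writing $\zeta=\omega^T\star x=g_1^{u^TAx+t^Tx+\gamma\delta v^Tx}$, I would check that the three quantities assembled by \emph{Verify} evaluate in $\G_T$ to $H=e(g_1;g_2)^{u^Ty}$, $D_1D_2=e(g_1;g_2)^{t^Tx}$ and $e(\mathrm{Trace}(C);g_2^\gamma)=e(g_1;g_2)^{\gamma\delta v^Tx}$, using the vectorization identity~(\ref{eq:tracerk1}) to rewrite $u^Ty=\eta^TY\mu=\mathrm{Trace}(UY)$, $t^Tx=\mathrm{Trace}(TX)$ and $v^Tx=\mathrm{Trace}(VX)$ as the diagonal sums produced in the exponents (the two rank-one pieces of $t$ supply $D_1$ and $D_2$). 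Since $y=Ax$ on the honest side, the product of the three equals $e(\zeta;g_2)$, so the final test passes, and the four Freivalds tests pass with probability $1$ because the honest encodings satisfy the tested equalities identically. This mirrors the single check of Theorem~\ref{thm:sparsecdh}. For the complexity table I would simply add, phase by phase, the costs already isolated in Lemmas~\ref{lem:rk1dp} and~\ref{lem:gendp}: the blocks of \emph{Verify} are exactly the verification steps of Figure~\ref{fig:rk1dp} (used three times, for $z$ with $b_1,b_2$ and for the two rank-one pieces $s_1,s_2$ with $c_1,c_2$) and of Figure~\ref{fig:gendp} (for $C$ with $d_1,d_2$). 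The single product $y=Ax$ contributes $\mu(A)$ field operations to the Prover, forming $u^TA$ contributes $\mu(A)+\bigO{m+n}$ to the Preparator, the group matrix product $C=g_1^{\delta V}\star X$ with $d_1=\bigO{n^{1/3}}$ dominates the Prover's group cost at $\bigO{n^{4/3}}$, and the pairing count collapses to $\bigO{b_1+c_1+d_1}=\bigO{\sqrt m+\sqrt n}$, reproducing the stated bounds.

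For soundness, the first stage is statistical. Conditioning on all four Freivalds tests passing, I would argue that, except with probability $\bigO{1/p}$ over the verifier's random $v_1,v_2,v_3,v_4$, the prover's auxiliary encodings are forced to be \emph{well formed}: $s_i=g_1^{\tau_i^TX}$ and $C=g_1^{\delta VX}$ must agree with the public $x$, while $z=g_1^{\eta^TY}$ must agree with the \emph{claimed} output $y^\ast$. This is exactly the soundness already established in Lemmas~\ref{lem:rk1dp} and~\ref{lem:gendp}, and a union bound over the four tests keeps the error negligible. Consequently $H=e(g_1;g_2)^{u^Ty^\ast}$, $D_1D_2=e(g_1;g_2)^{t^Tx}$ and $e(\mathrm{Trace}(C);g_2^\gamma)=e(g_1;g_2)^{\gamma\delta v^Tx}$ hold with the \emph{adversarial} $y^\ast$, so the final test reduces to the single exponent equality $\zeta=g_1^{u^Ty^\ast+t^Tx+\gamma\delta v^Tx}$ in $\G_1$.

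The second stage is computational, and this is where I expect the main obstacle. Comparing with the honest tag $\omega^T\star x=g_1^{u^TAx+t^Tx+\gamma\delta v^Tx}$, a prover cheating with $y^\ast\neq Ax$ must exhibit $g_1^{u^T\Delta}$ for the nonzero difference $\Delta=y^\ast-Ax$, even though the masking by $t^T$ and $\gamma\delta v^T$ prevents it from isolating any $u$-linear quantity from $\omega$. I would convert this into a \mbox{co-CDH} solver following the analysis of \cite[Theorem~3]{Fiore:2012:PVD}: on input $(g_1,g_2,g_1^a,g_2^b)$, plant $a$ and $b$ in the secret rank-one factors $\eta$ and $\mu$ so that $u\simeq\mu\eta^T$ secretly carries $ab$, sample $\omega^T$ as a fresh uniformly random vector, and use the extra component $\gamma\delta v^T$ together with the rank-$2$ shape $t^T=\rho_1\tau_1^T+\rho_2\tau_2^T$ as the degrees of freedom that let the reduction cancel the otherwise-uncomputable term $g_1^{u^TA}$ while publishing $EK_f,VK_f$ with the correct distribution and revealing $v$ only through its encodings. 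A successful forgery then yields $g_1^{u^T\Delta}=g_1^{ab\,(\tilde\eta^TD\tilde\mu)}$, where $D$ is the matrix form of $\Delta$; since $\Delta\neq 0$ and $\tilde\eta,\tilde\mu$ are uniform, the scalar $\tilde\eta^TD\tilde\mu$ is nonzero except with probability $\bigO{1/p}$, and dividing it out returns $g_1^{ab}$. The delicate points, and the reason the two structural devices earn their keep, are precisely this simulation step: arranging that the reduction can publish a correctly distributed $\omega$ and consistent encoded $t,v$ \emph{without} knowing $ab$, so that the adversary's forgery is the only possible source of $g_1^{u^T\Delta}$, and then verifying both that the planted distribution is indistinguishable from the real one and that the extracted bilinear coefficient stays nonzero against an adversary whose $\Delta$ may depend on the hidden $\tilde\eta,\tilde\mu$.
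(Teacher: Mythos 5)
Your completeness argument, your complexity accounting, and your two-stage soundness skeleton all coincide with the paper's proof: completeness does come down to Equation~(\ref{eq:tracerk1}), the cost table is obtained by summing the per-block costs of the sub-protocols of Figures~\ref{fig:rk1dp} and~\ref{fig:gendp}, the four Freivalds checks are indeed what forces $s_1,s_2,z,C$ to be well formed, and the extraction is identical (a forgery $\zeta'$ for $y'\neq y$ yields $e(\zeta(\zeta')^{-1};g_2)=e(g_1;g_2)^{ab\,u_0^T(y-y')}$, and one divides by the with-high-probability-nonzero scalar $u_0^T(y-y')=\eta_0^T D\mu_0$ to get $g_1^{ab}$). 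The genuine gap is in the one step you explicitly defer as a ``delicate point'': the simulation of \emph{KeyGen}. Your instruction to ``sample $\omega^T$ as a fresh uniformly random vector'' cannot be carried out consistently. If $\omega=g_1^{\psi^T}$ for an independently chosen $\psi$, consistency forces $\gamma\delta v^T=\psi^T-ab(u_0^TA+t_0^T)$, and then the published keys $g_1^{\delta\cdot V}\in EK_f$ and $g_1^{\delta\varpi^T\cdot V}\in VK_f$ would require raising $g_1^{ab}$ to known powers --- exactly the \mbox{co-CDH} target the reduction is trying to compute (and planting $b$ or $ab$ inside $\gamma$ instead only moves the obstruction to $g_1^{\psi^T/b}$ or to $g_2^{\gamma}$). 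So, taken literally, your simulation step fails at the point of publishing the $v$-dependent keys.

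The missing idea, which is the heart of the paper's proof (and the very reason the term $\gamma\delta v^T$ was added to the protocol), is to choose $v$ \emph{explicitly} rather than implicitly: the reduction plants $a,b$ in \emph{all} the rank-one factors, not only $\eta,\mu$ --- namely $\eta=a\eta_0$, $\mu=b\mu_0$, $\tau_i=a\tau_{0i}$, $\rho_i=b\rho_{0i}$, $\delta=a$ and $\gamma\delta=a(b+r)$ --- so that the whole mask satisfies $u^TA+t^T=ab(u_0^TA+t_0^T)$, and then sets $v:=-(A^Tu_0+t_0)$. With this choice the uncomputable part cancels \emph{inside the exponent}: $u^TA+t^T+\gamma\delta v^T=ar\,v^T$, so $\omega^T=L^{rv^T}$ is computable from $L=g_1^a$ alone, and every $v$-dependent key is a power of $L$ or $R=g_2^b$ with exponents known to the reduction (e.g., $g_1^{\delta\cdot V}=L^{V}$). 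Note that planting $ab$ into $t$ as well is essential here: a single vector $v$ can only cancel $u_0^TA$ and $t_0$ simultaneously if both carry the same hidden factor $ab$. The price is that this $\omega$ is no longer uniform (it is correlated with $u_0,t_0,A$), which is why the paper invokes DDH in each group together with \cite[Theorem~3]{Fiore:2012:PVD} to argue the simulated keys are indistinguishable from the real distribution --- the indistinguishability issue you correctly anticipated, but resolved by pseudorandomness rather than by uniform sampling. With that substitution, the rest of your argument goes through as the paper's does.
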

The proof of Theorem~\ref{thm:provenproto} is given in
Appendix~\ref{sec:protofig}.
\begin{remark} Fast matrix multiplication can be used for the computation of $C$
	in the protocol of Figure~\ref{fig:matrix-vector}. This decreases the $\bigO{n^{4/3}}$ factor of the Prover to $\bigO{n^{(1+\omega)/3}}$ where $\omega$ is the exponent of
  matrix-matrix multiplication. The currently best known exponent, given
  in~\cite{LeGall:2014:fmm}, is $\omega\leq{}2.3728639$. This immediately
  yields a reduced bound for the Prover of
  $\mu(A)+\bigO{m+n^{1.12428797}}$. 
  This together with Corollary~\ref{cor:gendp}
  can produce a protocol with Prover complexity bounded by
  $\mu(A)+\bigO{m+n^{1+0.12428797a}}$, for any $0<a<1$, while the Verifier
  complexity is $\bigO{n}$ classical operations and $o(n)$ cryptographic
  operations.
\end{remark}

\section{Conclusion and experiments}\label{sec:expes}

We first recall in Table~\ref{tab:complexities} the leading terms of the
complexity bounds for our protocols and those
of~\cite{Fiore:2012:PVD,Blanton:2014:matrix,Elkhiyaoui:2016:ETPV} (that is each
value $x$ in a cell is such that the actual cost is bounded by $x+o(x)$).
\newcommand{\baseField}[1]{\ensuremath{{#1}\cdot{\mathcal F}}}
\newcommand{\expoNent}[1]{\ensuremath{{#1}\cdot{\mathcal G}}}
There, we denote the base field operations by \baseField{},
the cryptographic group exponentiations or pairing operations by \expoNent{},
and the cost of a product of the matrix $A\in\F_p^{m{\times}n}$ by a vector is 
$\mu(A)$.
\begin{table}[htb]\centering
\caption{Leading terms for the time and memory complexity bounds (exchange of
  $A$, $x$ and $y$ excluded).}\label{tab:complexities}
\begin{tabular}{|c||c|c|c|} 
\hline
Scheme  & \cite{Fiore:2012:PVD} & \cite{Blanton:2014:matrix} & \cite{Elkhiyaoui:2016:ETPV} \\
\hline
Mode  &  Public verif.  &  Public  deleg. &  Public  deleg. \\
\hline
Preparator (KeyGen)	&  $\baseField{2mn}+\expoNent{mn}$ 
			&  $-$  
			&  $\baseField{2mn}+\expoNent{2mn}$ \\
Trustee (ProbGen)	&  $\baseField{2(m+n)}+\expoNent{2m}$  
			&  $\baseField{mn}+\expoNent{(2m+n)}$  
			&  $\expoNent{n}$  \\
Prover (Compute)	&  $\baseField{\mu(A)}+\expoNent{2mn}$  
			&  $\baseField{\mu(A)}+\expoNent{2mn}$ 
			&  $\baseField{\mu(A)}+\expoNent{2mn}$  \\
Verifier		&  $\expoNent{2m}$  
			&  $\expoNent{2m}$ 
			&  $\expoNent{m}$  \\
\hline
Extra storage		& $\bigO{mn}$ & $\bigO{mn}$ & $\bigO{mn}$ \\
Extra communications	& $\bigO{m}$ & $\bigO{m}$ & $\bigO{1}$\\ 
\hline
\end{tabular}

\smallskip

\begin{tabular}{|c||c|c|}
\hline
Scheme  & Figure~\ref{fig:spmv} & Figure~\ref{fig:matrix-vector} \\
\hline
Mode  &  Public verif. &  Public deleg.  \\
\hline
Preparator (KeyGen)  	&  $\baseField{(\mu(A)+n)}+\expoNent{n}$  
			&  $\baseField{(\mu(A)+m+5n)}+\expoNent{2n}$  \\
Trustee (ProbGen)  	&  $\baseField{2(m+n+1)}+\expoNent{1}$  
			&  $0$  \\
Prover (Compute)  	&  $\baseField{\mu(A)}+\expoNent{2n}$  
			&  $\baseField{\mu(A)}+\expoNent{(2n^{4/3}+m)}$  \\
Verifier 		&  $\expoNent{1}$ 
			&  $\baseField{(2m+4n)}+\expoNent{(6\sqrt{m}+2n^{2/3})}$\\
\hline
Extra storage		& $\bigO{m+n}$ & $\bigO{n}$ \\
Extra communications	& $\bigO{1}$ & $\bigO{n^{2/3}+\sqrt{m}}$\\ 
\hline
\end{tabular}\end{table}
We see that our protocols are suitable to sparse or structured matrix-vector
multiplication as 
they never require $\bigO{mn}$ operations but rather $\mu(A)$. Moreover, we see
that most of the Verifier's work is now in base field operations were it was
cryptographic operations for previously known protocols.
\begin{table}[htb]\centering
\caption{Matrix-vector multiplication public verification over a 256-bit finite
  field with different protocols on a i7 @3.4GHz.}\label{tab:expes} 
\sigsmall
\begin{tabular}{|l||r|r|r|r|r||r|r|r|r|}
\hline
& \multicolumn{5}{|c||}{$1000{\times}1000$}  &
\multicolumn{4}{|c|}{$2000{\times}2000$} \\
\cline{2-10}
& \multicolumn{1}{|c|}{\cite{Setty:2012:pepper}} &
\multicolumn{1}{|c|}{\cite{Fiore:2012:PVD}} &
\multicolumn{1}{|c|}{\cite{Blanton:2014:matrix}} &
\multicolumn{1}{|c|}{\cite{Elkhiyaoui:2016:ETPV}} & 
\multicolumn{1}{|c||}{Fig.~\ref{fig:matrix-vector}} & 
\multicolumn{1}{|c|}{\cite{Fiore:2012:PVD}} &
\multicolumn{1}{|c|}{\cite{Blanton:2014:matrix}} &
\multicolumn{1}{|c|}{\cite{Elkhiyaoui:2016:ETPV}} & 
\multicolumn{1}{|c|}{Fig.~\ref{fig:matrix-vector}} \\
\hline
KeyGen	&141.68s& 152.62s	& -		& 154.27s 	& {\bf 0.80s}
	& 615.81s	& -		& 612.72s 	& {\bf 1.75s}\\
ProbGen	&-	&  1.25s		& 2.28s		& 2.30s		& -     
	& 2.13s	& 4.98s		& 4.56s		& -     \\
$Ax=y$	&20.14s	& \bf 0.19s	& \bf 0.19s	& \bf 0.19s	& \bf 0.19s
	& \bf 0.78s & \bf 0.78s	& \bf 0.78s	& \bf 0.78s  \\
Compute	&188.60s& 273.06s 	& 433.88s	& 271.03s	& {\bf 2.26s}
	& 1097.96s & 1715.46s	& 1079.71s	& {\bf 5.37s}\\
Verify	&2.06s	& 26.62s		& 27.56s		& {\bf 0.33s}	& 0.90s
	& 52.60s	& 55.79s		& {\bf 0.62s}	& 1.19s\\
\hline
\multicolumn{6}{c}{}\\
\end{tabular}
\sigvspace{-5pt}
\begin{tabular}{|l||r|r|r|r||r|r|r|r|}
\hline
& \multicolumn{4}{|c||}{$4000{\times}4000$}  &
\multicolumn{4}{|c|}{$8000{\times}8000$} \\
\cline{2-9}
& 
\multicolumn{1}{|c|}{\cite{Fiore:2012:PVD}} &
\multicolumn{1}{|c|}{\cite{Blanton:2014:matrix}} &
\multicolumn{1}{|c|}{\cite{Elkhiyaoui:2016:ETPV}} & 
\multicolumn{1}{|c||}{Fig.~\ref{fig:matrix-vector}} & 
\multicolumn{1}{|c|}{\cite{Fiore:2012:PVD}} &
\multicolumn{1}{|c|}{\cite{Blanton:2014:matrix}} &
\multicolumn{1}{|c|}{\cite{Elkhiyaoui:2016:ETPV}} & 
\multicolumn{1}{|c|}{Fig.~\ref{fig:matrix-vector}} \\
\hline
KeyGen	& 2433.10s	& -		& 2452.98s	& {\bf 4.89s}
	& 9800.42s	& -		& 9839.26s 	& {\bf 15.64s}\\
ProbGen	& 3.81s		& 13.29s		& 9.24s		& -
	& 7.41s		& 43.44s		& 18.46s		& -     \\
$Ax=y$	& {\bf 3.28s}	& {\bf 3.28s}	& {\bf 3.28s}	& {\bf 3.28s}
	& \bf  13.30s	& \bf 13.30s	& \bf 13.30s	& {\bf 13.30s} \\
Compute	& 4360.43s	& 6815.40s	& 4329.46s	& {\bf 13.76s}
	& 17688.69s 	& 27850.90s	& 17416.38s	& {\bf 37.00s}\\
Verify	& 103.14s	& 107.99s	& {\bf 1.20s}	& 1.65s
	& 211.07s	& 220.69s	& 2.37s 		& {\bf 2.25s}\\
\hline
\end{tabular}
\sigvspace{-5pt}
\end{table}
As shown in Table~\ref{tab:expes} and in Figure~\ref{fig:provprotexpes}, this is
very useful in practice, even for dense matrices. For these experiments we 
compare with our own implementations of the protocols 
of~\cite{Fiore:2012:PVD,Blanton:2014:matrix,Elkhiyaoui:2016:ETPV} 
over the PBC library\footnote{\url{https://crypto.stanford.edu/pbc},
  version~0.5.14}~\cite{Lynn:2010:PBC} for the pairings and the FFLAS-FFPACK
library\footnote{\url{http://linbox-team.github.io/fflas-ffpack},
  version~2.2.2}~\cite{jgd:2008:toms} for the exact linear algebra over finite
fields (C++ source files are available on request via the PC and will be
publicly posted on our web site if the paper is accepted).  
\begin{figure}[!htb]
\includegraphics[width=\textwidth]{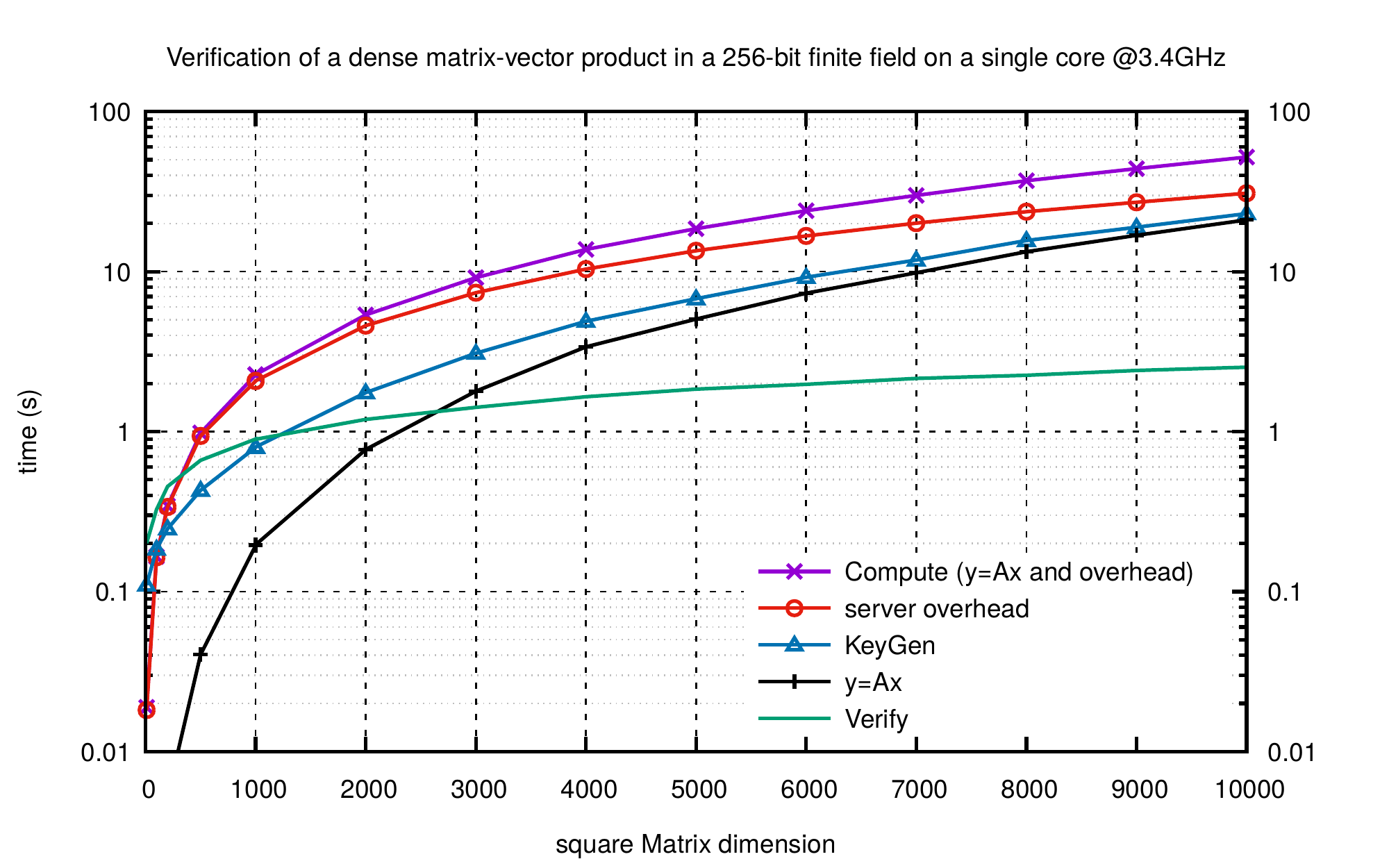}
\caption{Protocol~\ref{fig:matrix-vector} performance.}\label{fig:provprotexpes}
\sigvspace{-15pt}\end{figure}
We used randomly generated dense matrices and vectors and to optimize the costs
(pairings are more expensive than exponentiations), we also chose the following
parameters for the vectorizations:
$b_1=\lceil\sqrt{m}/10\rceil{}$; $b_2=\lceil{}10\sqrt{m}\rceil{}$;
$c_1=\lceil{}\sqrt{n}/10\rceil{}$; $c_2=\lceil{}10\sqrt{n}\rceil{}$;
$d_1=\lceil{}n^{1/3}/3\rceil{}$; $d_2=\lceil{}3n^{2/3}\rceil{}$.
We indeed chose a type 3 pairing over a
Barreto-Naehrig curve \cite{Barreto2006} based on a 256-bits prime field, which should guarantee 128 bits of security. 
First, with $\F_p$ the $256$-bits prime
field\footnote{{\tiny $p=57896044618658115533954196422662521694340972374557265300857239534749215487669$}},
$\G_1$ is the group of $\F_p$-rational points $E(\F_p)$ with parameters: 
$\G_1~(E): y^2 = x^3 + 6$, modulo $p$.
Second, $\G_2$ is a subgroup of a sextic twist of $E$ defined over
$\F_{p^2}$ denoted $E'(\F_{p^2})$ with parameters\footnote{\tiny
  $a_0{=}52725052272451289818299123952167568817548215037303638731097808561703910178375$, $a_1{=}39030262586549355304602811636399374839758981514400742761920075403736570919488$}:
$\G_2~(E'): y^2 = x^3 + 6e$, $\F_{p^2}\cong \F_p[X]/(X^2-2)$, $e = a_0 + a_1
X\in\F_{p^2}$.
The third group $\G_T$ is then a subgroup of the multiplicative group of
the field $\F_{p^{12}}$.
This curve is reasonably well-suited to our needs and is supported by the PBC
library. 
We used it for our own implementation of the three
protocols~\cite{Fiore:2012:PVD,Blanton:2014:matrix,Elkhiyaoui:2016:ETPV} as
well as for ours, Section~\ref{sec:fully} and Figure~\ref{fig:provprot}. 

In the first set of timings of Table~\ref{tab:expes}, we also compare the latter
protocols with a compiled verifiable version obtained via the Pepper
software\footnote{\url{https://github.com/pepper-project/pepper}, git:~fe3bf04}~\cite{Setty:2012:pepper}.
This software uses a completely different strategy, namely that of compiling a C
program into a verifiable one. We added the timings for $n=1000$ as a comparison,
but the Pepper compilation thrashed on our 64 GB machine for $n\geq{}2000$.

In terms of Prover time, we see that our protocols are between two to three
orders of magnitude faster than existing ones (further evidence is given in
Table~\ref{tab:speedup}, Appendix~\ref{sec:protofig}). 
Moreover, overall we see that
with the new protocol, the data preparation (KeyGen) is now very close to a
single non-verified computation and that the work of the Prover can be less
than three times that of a non-verified computation (note first, that in both
Table~\ref{tab:expes} and Figure~\ref{fig:provprotexpes}, the ``Compute'' fields
include the computation of $y=Ax$, and, second,
that the Prover overhead being asymptotically faster than the compute time, this
latter overhead is rapidly amortized).
Finally, only the protocol of~\cite{Elkhiyaoui:2016:ETPV} did exhibit a
verification step faster than the computation itself for size $2000{\times}2000$
whereas, as shown in Figure \ref{fig:provprotexpes}, our protocol achieves
this only from size $3000{\times}3000$.
However, we see that we are competitive for larger matrices. Moreover, as shown
by the asymptotics of Theorem~\ref{thm:provenproto}, our overall performance
outperforms all previously known protocols also in practice, while keeping an
order of magnitude faster Verification time. 



\appendix

 \section{Fiore and Gennaro's protocol}\label{sec:fg}
 For the sake of completeness, we present here the original protocol for
 matrix-vector verification in~\cite{Fiore:2012:PVD}, but with our rank-one
 update view. It stems from the fact that if $s,t,\rho,\tau$ are
 randomly generated vectors 
 then the function $g^{M[i,j]}$, where $M=s\cdot{}t^T+\rho\cdot\tau^T$, is a
 pseudorandom function~\cite[Theorem~3]{Fiore:2012:PVD}, provided that the
 \emph{Decision Linear} assumption~\cite[Definition~3]{Fiore:2012:PVD} holds (a
 generalization of the External Diffie-Hellman assumption for pairings).
 \begin{itemize}
 \item \emph{KeyGen}: 
   for $A\in\F_p^{m{\times}n}$, generate 3 multiplicative groups
   $(\G_1,\G_2,\G_T)$ of prime order $p$, with $g_1$ generating $\G_1$
   (resp. $g_2$ generating $\G_2$), and a bilinear map
   $e:\G_1{\times}\G_2\rightarrow\G_T$.
   Generate $2(m+n)$ secret random values $s\in\F_p^m$, $t\in\F_p^n$,
   $\rho\in\F_p^m$, $\tau\in\F_p^n$.
   Compute $W[i,j]=g_1^{\alpha A[i,j]+s[i]t[j]+ \rho[i]\tau[j]}\in\G_1$, give $W$ to the server and publish
   $a=e(g_1^{\alpha},g_2)\in\G_T$.
 \end{itemize}
 Let $x\in\F_p^n$ be a query vector
 \begin{itemize}
 \item \emph{ProbGen}: 
   compute $\text{VK}_x\in\G_T^m$, such that $d=t^T\cdot{}x\in\F_p$,
   $\delta=\tau^T\cdot{}x\in\F_p$, and $\text{VK}_{x}[i]=e(g_1^{s[i]d+\rho[i]\delta};g_2)\in\G_T$.
 \item \emph{Compute}: 
   compute $y=Ax$ and $z=W\star{}x\in\G_1^m$ (that is
   $z[i]=\prod_{j=1}^n W[i,j]^{x[j]}$).
 \item \emph{Verify}:
   check that $e(z[i];g_2)=a^{y[i]} \text{VK}_{x}[i]$, for all $i=1,\ldots,m$.
 \end{itemize}
In this protocol, the flow of communications is as follows:
\begin{enumerate}
\item Preparator: secret random $\alpha\in\F_p$, $s\in\F_p^m$, $t\in\F_p^n$, $\rho\in\F_p^m$, $\tau\in\F_p^n$.
then $W=g_1^{\alpha A+s\cdot{}t^T+\rho\cdot{}\tau^T}$
\item Preparator to Prover: $A\in\F_p^{m{\times}n}$, $W\in\G_1^{m{\times}n}$.
\item Preparator to Trustee: $s$, $\rho$, $t$, $\tau$, in a secure channel.
\item Preparator publishes and signs $a=e(g_1;g_2)^{\alpha}\in\G_T$.
\item Verifier to both Prover and Trustee: $x\in\F_p^n$.
\item Trustee publishes and signs $\text{VK}_{x}\in\G_T^m$ such that :
  $\text{VK}_{x}=e(g_1;g_2)^{s\cdot{}(t^T\cdot{}x)+\rho\cdot{}(\tau^T\cdot{}x)}$.
\item Prover to Verifier: $y\in\F_p^m$, $z\in\G_1^m$. 
\item Verifier public verification: $e(z;g_2)\checks{}a^{y}\text{VK}_{x}$
  (com\-po\-nent-wise in $\G_T^m$).
\end{enumerate}

 This protocol is sound, complete and publicly verifiable. It however uses many
 costly exponentiations and pairings operations that renders it inefficient in
 practice: even though the Client and Trustee number of operations is linear in
 the vector size, it takes still way longer time that just computing the
 matrix-vector product in itself, as shown in the experiment
 Section~\ref{sec:expes}. 

 In this paper, our aim was first to adapt this protocol to the
 sparse/structured case, and, second, to reduce the number of
 cryptographic operations in order to obtain a protocol efficient in
 practice.

\section{Probabilistic verification and the random oracle model}\label{sec:probaverif}
First we recall that private verification is very fast and does not require any
cryptographic routines. Then we show that this allows to obtain a very efficient
protocol in the random oracle model, but for a fixed number of inputs.

\subsection{Private verification}\label{ssec:freivalds}
Without any recourse to cryptography, it is well known how to privately verify a
matrix-vector multiplication. The idea is to use Freivalds
test~\cite{Freivalds:1979:certif}, \emph{on the left}, provided that
multiplication by the transpose matrix is possible:\\

\noindent\fbox{\begin{minipage}{\linewidth - 3\fboxsep}\sigsmall
\begin{itemize}
\item Verifier to Prover: $A$, $x_i\in\F_p^n$, for $i=1,\ldots,k$.
\item Prover to Verifier: $y_i\in\F_p^m$, for $i=1,\ldots,k$.
\item Verifier verification: random $u\in\F_p^m$, then $w^T = u^T\cdot{}A$,
and finally check, for $i=1,\ldots,k$, that
$w^T\cdot{}x_i\checks{}u^T\cdot{}y_i$ in $\F_p$.
\end{itemize}
\end{minipage}}\\

On the one hand, this protocol uses only classical arithmetic and is adaptable
to sparse matrices, that is when a matrix vector product costs 
$\mu(A)$ operations with $\mu(A)<2mn$ (this is the case for instance if the
matrix is not structured but is sparse with $\mu(A)/2<mn$ non-zero elements).  
Indeed, in the latter case, the cost for the Prover is $k\mu(A)$, where the
cost for the Verifier is $\mu(A)+4kn$.

On the other hand, the protocol has now Freivalds
probability of revealing an error in any of the $y_i$: $1-1/p$, if $\F_p$ is of
cardinality~$p$ (or $1-1/p^\ell$ if $u$ is chosen in an extension of degree
$\ell$ of $\F_p$).

\subsection{Public verification in the random oracle model}\label{ssec:oracle}
Using Fiat-Shamir heuristic~\cite{Fiat:1986:Shamir}, the privately
verifiable certificate of Section~\ref{ssec:freivalds} can be simulated
non-interactively: uniformly sampled 
random values produced by the Verifier are replaced by cryptographic hashes (to prove
security in the random oracle model) of the input and of previous messages in
the protocol.   
Complexities are preserved, as producing cryptographically strong
pseudo-random bits by a cryptographic hash function (e.g., like the
extendable output functions of the SHA-3 family defined
in~\cite{Bertoni:2010:sponge,SHAKE}), is linear in the size of both its input
and output (with atomic operations often even faster than finite field ones):\\ 

\noindent\fbox{\begin{minipage}{\linewidth - 3\fboxsep}\sigsmall
\begin{itemize}
\item Preparator to Prover: $A\in\F_p^{m{\times}n}$.
\item Verifier to Prover: $x_i\in\F_p^n$, for $i=1,\ldots,k$.
\item Prover to Verifier: $y_i\in\F_p^m$, for $i=1,\ldots,k$.
\item Verifier to Trustee: all the $x_i$ and $y_i$.
\item Trustee publishes and signs both $u\in\F_p^m$ and $w\in\F_p^n$ such that:
  $u=Hash(A,x_1,\ldots,x_k,y_1,\ldots,y_k)\in\F_p^m$, then $w^T = u^T\cdot{}A\in\F_p^n$. 
\item Verifier public verification: $w^T\cdot{}x_i\checks{}u^T\cdot{}y_i$ in $\F_p$.
\end{itemize}
\end{minipage}}\\

\bigskip

There is absolutely no overhead for the Prover;
the cost for the Trustee is a single matrix-vector product for any $k$ plus a
cost linear in the input size;
and the cost for the Verifier is $\bigO {nk}$. Using Fiat-Shamir heuristic this
gives a possibility for an afterwards public verification (that is after the
computations), but this not possible to test new vectors once $u$ has been
revealed.

\section{Proof of Theorem~\texorpdfstring{\ref{thm:sparsecdh}}{1}}\label{ssec:proofsparse}
We here give the proof of Theorem~\ref{thm:sparsecdh},
page~\pageref{thm:sparsecdh}, recalled hereafter.
\newcounter{oldtheorem}
\setcounterref{theorem}{thm:sparsecdh}
\addtocounter{theorem}{-1}
\begin{theorem} The protocol of Figure~\ref{fig:spmv} is
  perfectly complete and sound under the co-Computational Diffie-Hellman Problem
  assumption.
\end{theorem}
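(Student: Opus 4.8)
The plan is to establish the two properties separately. For \emph{perfect completeness}, I would simply trace an honest execution through the protocol of Figure~\ref{fig:spmv}. An honest Prover sends $y_i=Ax_i$ and $\zeta_i=\omega^T\star x_i$. Since $\omega^T=g_1^{u^TA+t^T}$, by the definition of the exponent operation $\star$ we have $\zeta_i=g_1^{(u^TA+t^T)x_i}=g_1^{u^TAx_i+t^Tx_i}$. The Trustee computes $h_i=u^Ty_i=u^TAx_i$ and $d_i=t^Tx_i$, and sends $\eta_i=e(g_1;g_2)^{h_i+d_i}$. The final public check is $e(\zeta_i;g_2)\checks\eta_i$. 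By bilinearity, $e(\zeta_i;g_2)=e(g_1;g_2)^{u^TAx_i+t^Tx_i}=e(g_1;g_2)^{h_i+d_i}=\eta_i$, so the check passes with probability~$1$. This part is routine and I expect no obstacle.

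For \emph{soundness}, I would argue by reduction: assume a polynomially bounded adversary $\mathcal{A}$ that, playing the malicious Server, produces a forged pair $(y_i',\zeta_i')$ with $y_i'\neq Ax_i$ that nonetheless passes verification, and build from it a solver for co-CDH. The key observation is that the honest Trustee always computes $\eta_i$ from the \emph{true} values $h_i=u^Ty_i'$ and $d_i=t^Tx_i$ using the $y_i'$ actually supplied, so a successful forgery forces $e(\zeta_i';g_2)=e(g_1;g_2)^{u^Ty_i'+t^Tx_i}$, i.e. $\zeta_i'=g_1^{u^Ty_i'+t^Tx_i}$. Meanwhile the correct certificate value for the claimed (but wrong) $y_i'$ would be $g_1^{u^TAx_i+t^Tx_i}$; so from a forgery the reduction effectively extracts $g_1^{u^T(y_i'-Ax_i)}$ with $y_i'-Ax_i\neq 0$, a quantity depending on the secret masking vector~$u$ in the exponent. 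The reduction embeds the co-CDH challenge so that learning $g_1^{u^T(y_i'-Ax_i)}$ for an adversarially chosen nonzero vector $y_i'-Ax_i$ yields the target $g_1^{ab}$; the masking term $t^T$ guarantees that $\omega^T$ given to the Server is distributed independently of~$u$, so the Server gains no information about~$u$ and the embedded challenge remains hidden.

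The main obstacle I anticipate is the \emph{embedding of the co-CDH instance}: one must show that a forger, who sees only $A$ and $\omega^T=g_1^{u^TA+t^T}$, cannot distinguish the simulated keys from honestly generated ones, and that any nonzero error vector $y_i'-Ax_i$ can be aligned with the challenge so that the forged $\zeta_i'$ reveals $g_1^{ab}$. Concretely I would program $u$ (or a chosen linear functional of it) to carry the secret exponent $a$, use the freedom in $t$ to randomize $\omega^T$ into a correctly distributed vector while cancelling the unknown contributions, and then use bilinearity to read off $g_1^{ab}$ from the ratio $\zeta_i'/g_1^{t^Tx_i}$ projected along the error direction. Handling the case where the adversary's error happens to lie in the kernel of the embedded functional---and thus bounding the simulation loss by a factor at most linear in the number of queries---is the delicate point; everything else reduces to bilinearity and the independence of the masking distribution, which I would invoke along the lines of \cite[Theorem~3]{Fiore:2012:PVD}.
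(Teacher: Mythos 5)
Your completeness argument is exactly the paper's (honest trace plus bilinearity), and the skeleton of your soundness reduction also matches the paper's: embed the co-CDH challenge in the secret $u$, use the freedom in $t$ to hand the adversary a correctly distributed $\omega^T$ while cancelling the unknown exponents, and extract the target from a ratio of group elements that pass verification, via bilinearity and non-degeneracy. However, two of your concrete steps, as stated, would fail, and they sit at the crux of the reduction.

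First, you propose to ``program $u$ \dots{} to carry the secret exponent $a$''. That embedding cannot work: if $u$ along the error direction is a known multiple of $a$, then the extracted element $g_1^{u^T(y'-Ax)}$ is a known power of the \emph{given} element $g_1^a$, and nothing is solved. The embedding must place the \emph{product} of the two secrets in $u$: the paper (writing the challenge as $(g_1^c,g_2^d)$ with target $g_1^{cd}$) implicitly sets $u^T=[cd,0,\dots,0]$. Second, once the product is embedded, $g_1^{cd}$ is precisely the unknown target, so \emph{nothing} of the form $g_1^{u^T(\cdot)}$ or $g_1^{t^T(\cdot)}$ is computable in $\G_1$ by the reduction, where $t^T:=\psi^T-u^TA$ is only implicitly defined through the uniformly sampled $\psi$ with $\omega^T=g_1^{\psi^T}$. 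In particular your proposed read-off ratio $\zeta_i'/g_1^{t^Tx_i}$ is not available to the reduction. The paper instead uses the ratio $\zeta'/\zeta$, where $\zeta=\omega^T\star x=g_1^{\psi^Tx}$ \emph{is} computable from the known $\psi$; then $e(\zeta'/\zeta;g_2)=e(g_1;g_2)^{cd(y'[1]-y[1])}$, and non-degeneracy in prime-order groups gives $\zeta'/\zeta=g_1^{cd(y'[1]-y[1])}$, hence $g_1^{cd}$ after scaling. For the same reason, the honest Trustee's public responses $\eta_i=e(g_1;g_2)^{u^Ty_i+t^Tx_i}$ must be simulated entirely in $\G_T$ by pairing the challenge elements: the paper forms $z^T=[e(g_1^c;g_2^d),1,\dots,1]$, $\chi^T=z^T\star A$ and $\phi^T=e(\omega^T;g_2)/\chi^T$, and answers with $(z^T\star y_i)(\phi^T\star x_i)$; your plan does not account for this simulation step at all. (Your worry about the error lying in the kernel of the embedded functional is legitimate but minor: since $\omega$ is uniform whatever $u$ is, the adversary's error direction is independent of the embedding, and the paper disposes of it by normalizing the error to the first coordinate; no loss linear in the number of queries is needed.)
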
\setcounter{theorem}{\theoldtheorem}
\begin{proof}
For the correctness, we have that:
$\zeta_i =g_1^{(u^TA+t^T)\cdot{}x_i} =g_1^{u^T\cdot{}y_i+t^T\cdot{}x_i}=g_1^{h_i}g_1^{d_i}=g_1^{h_i+d_i}$.
Then, by bilinearity,
$e(\zeta_i;g_2)=e(g_1;g_2)^{h_i+d_i}=\eta_i$.

For the soundness,
a malicious Prover can guess the correct output values, but this happens once in
the number of elements of $\G_T$. 
Otherwise he could try to guess some matching $h_i$ and
$d_i$, but that happens less than one in the number of elements
of $\F_p$. 
Finally, the Prover could produce directly $\zeta_i$. 
Suppose then it is possible to pass our verification scheme for some $A$, $x$
and $y'\neq{}y=Ax$. Then without loss of generality, we can suppose that the
first coefficients of both vectors are different, $y'[1]\neq{}y[1]$ (via row
permutations) and that $y'[1]-y[1]=1$ (via a scaling). 

Take a co-computational Diffie-Hellman problem $(g_1^c,g_2^d)$, where $g_1^{cd}$
is unknown.  Then denote by $a=e(g_1^c;g_2^d)=e(g_1^{cd};g_2)$ and 
consider the vector $z^T=[a,e(1;1),\ldots,e(1;1)]$. Compute
$\chi^T=z^T\star{}A$. The latter correspond to $\chi^T=e(g_1^{u^TA};g_2)$ for (a
not computed) $u^T=[cd,0,\ldots,0]$. 
Now randomly choose $\psi^T=[\psi_1,\ldots,\psi_n]$ and compute 
$\omega^T=g_1^{\psi^T}$.
Compute also the 
vector $\phi^T=e(\omega^T;g_2)/\chi^T$ coefficient-wise. The latter correspond 
to $\phi^T=e(g_1^{t^T};g_2)$ for $t^T=\psi^T-u^TA$.
Finally, compute
$\zeta=g_1^{\psi^T\cdot{}x}$
(indeed, then 
$\mu=e(\zeta;g_2)=e(g_1^{\psi^T\cdot{}x};g_2)=
\eta=e(g_1^{u^T\cdot{}y};g_2)e(g_1^{t^T\cdot{}x};g_2)$, 
that is 
$\eta=(\chi^T\star{}x)(\phi^T\star{}x)$ is actually
$\eta=(z^T\star{}y)(\phi^T\star{}x)$).
Now, if it is possible to break the scheme, then it is possible to compute
$\zeta'$ that will pass the verification for $y'$ as $Ax$,
that is $e(\zeta';g_2)=(z^T\star{}y')(\phi^T\star{}x)$.  
Let $h=u^Ty$, $d=t^Tx$ and $h'=u^Ty'$.
Then $e(\zeta;g_2)=e(g_1^h;g_2)e(g_1^d;g_2)$ and 
$e(\zeta';g_2)=e(g_1^{h'};g_2)e(g_1^d;g_2)$. But
$h'-h=u^T(y'-y)=cd(y'[1]-y[1])=cd$ by construction. 
Therefore $\zeta'/\zeta=g_1^{cd}$, as $e$ is non-degenerate, and the \mbox{co-CDH} is
solved. 
\end{proof}

\section{Small fields}\label{ssec:smallfields}
The protocol of Figure~\ref{fig:spmv} is quite efficient. 
We have made experiments with randomly generated dense matrices and vectors with
the PBC library\footnotemark[1] for the pairings and the FFLAS-FFPACK
library\footnotemark[2] for the exact linear algebra over finite
fields.  
For instance, it is shown in Table~\ref{tab:spmv}, that for a $8000{\times}8000$
matrix over a field of size $256$ bits, 
the protocol is highly practical:  first, if the base field and the group orders
are of similar sizes, the verification phase is very efficient; 
second, the overhead of computing $\zeta$ for the server is quite negligible and
third, the key generation is dominated by the computation of one matrix-vector
product.

\begin{table}[htb]\centering
\caption{Verification of a $8000{\times}8000$ matrix-vector
  multiplication with different field sizes via the protocol in
  Figure~\ref{fig:spmv} on a single core @3.4GHz.}\label{tab:spmv}
\begin{tabular}{|r||r|r||r|r|r||r|r|r||r|}
\hline
\multirow{2}{*}{Field size}& \multirow{2}{*}{$|\G|$}& \multirow{2}{*}{Security}
& \multicolumn{3}{|c|}{KeyGen}
& \multicolumn{3}{|c|}{Compute}
& \multirow{2}{*}{Verify} \\
\cline{4-9}
& & & Total & $u^TA$ & overhead & Total & $y=Ax$ & overhead & \\
\hline
256& 256 & 128 & 13.65s &12.34s&1.22s &15.72s &13.46s&2.26s&0.03s\\
10& 322 & 128 & 1.96s&0.05s&1.81s&0.22s&0.09s&0.13s&0.04s\\
\hline
\end{tabular}
\end{table}

Differently, if the base field is small, say machine word-size, then having to
use cryptographic sizes for the group orders can be penalizing for the Key
Generation: multiplying a small field matrix $A$ with a large field
vector $u^T$ is much slower than $y=Ax$ with $x$ and $A$ small.
First of all, the computations must be compatible. For this, one possibility is
to ask and verify instead for $y=Ax$ over $\Z$ and then to let the Verifier
compute $y\mod{}p$ for himself.
There, to reduce the overhead of computing $u^TA$, one can instead select the
$m$ values of the vector $u$ as $u_\ell=\alpha r_i s_j$ with
$\ell=i\lceil\sqrt{m}\rceil{}+j$ for $\alpha$ a randomly chosen large value and
$r_i,s_j$ some randomly chosen small values. 
Indeed then $u^T A$ can be computed by first performing $(r s^T)A$
via $\bigO{\sqrt{m}}$ matrix-vector computations with $s$ (or a
$\sqrt{m}{\times}n\sqrt{m}$ matrix-vector multiplication) followed by
$\bigO{n\sqrt{m}}$ multiplications by $r$ (or a $n{\times}\sqrt{m}$
matrix-vector multiplication) where $s_j$ and $r_i$ are small values. 
Then it remains only to multiply a vector of small values by $\alpha$. 
We have traded $\bigO{mn}$ operations with large values for 
$\bigO{\sqrt{m}n\sqrt{m}+n\sqrt{m}}$ operations with small values and $\bigO{n}$
with large values. 

Now, in order for the values to remain correct over $\Z$, the value of 
$(u^TA+t^T)x$ must not overflow. For this, one must choose a group order
larger than $mnp^4$ (for $(r s^T)Ax$).
Now the security is not anymore half the size of the group order but potentially
half the size of the set from which $t^T$ is selected, that is at most the group
order size minus that of $np$ (for $t^Tx$). 
To be conservative we even propose,
as an estimated security of the obtained protocol, to consider only half the
size of $\alpha$ (that is the size of the group order minus that of $mnp^4$). 
In terms of efficiency, the
improvement is shown in Table~\ref{fig:spmv}, last row. 
On the one hand, the key generation is now dominant and can be amortized only
after about $10$ matrix-vector multiplications. 
On the other hand, the verification time starts to be faster
than the computation time. 
This is also shown in Figure~\ref{fig:smallfields} where the equivalent of
the last row in Table~\ref{tab:spmv} is shown for different matrix dimensions.
\begin{figure}[htb]
\includegraphics[width=\linewidth]{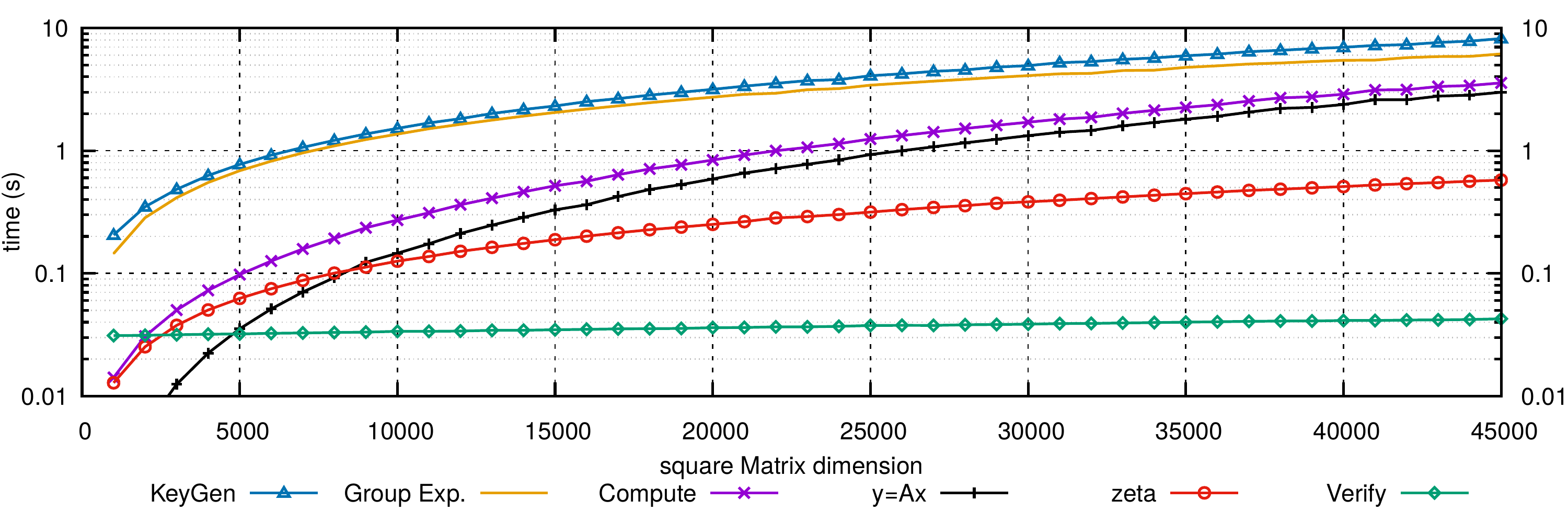}
\caption{Trustee-helped Verification of a dense matrix-vector product in a 10-bits finite field on a single core @3.4GHz.}\label{fig:smallfields}
\end{figure}


\section{Proven publicly delegatable protocol with negligible
  cryptographic operations}\label{sec:protofig}

We first give the proof of Theorem~\ref{thm:provenproto},
page~\pageref{thm:provenproto}, recalled hereafter,
for the correctness, soundness and complexity of the protocol in
Figures~\ref{fig:matrix-vector} and~\ref{fig:provprot}.
The flow of exchanges within our protocol is also illustrated in
Figure~\ref{fig:provprot}. 

\begin{figure}[htbp]\center
\hspace*{-1.8cm}\input{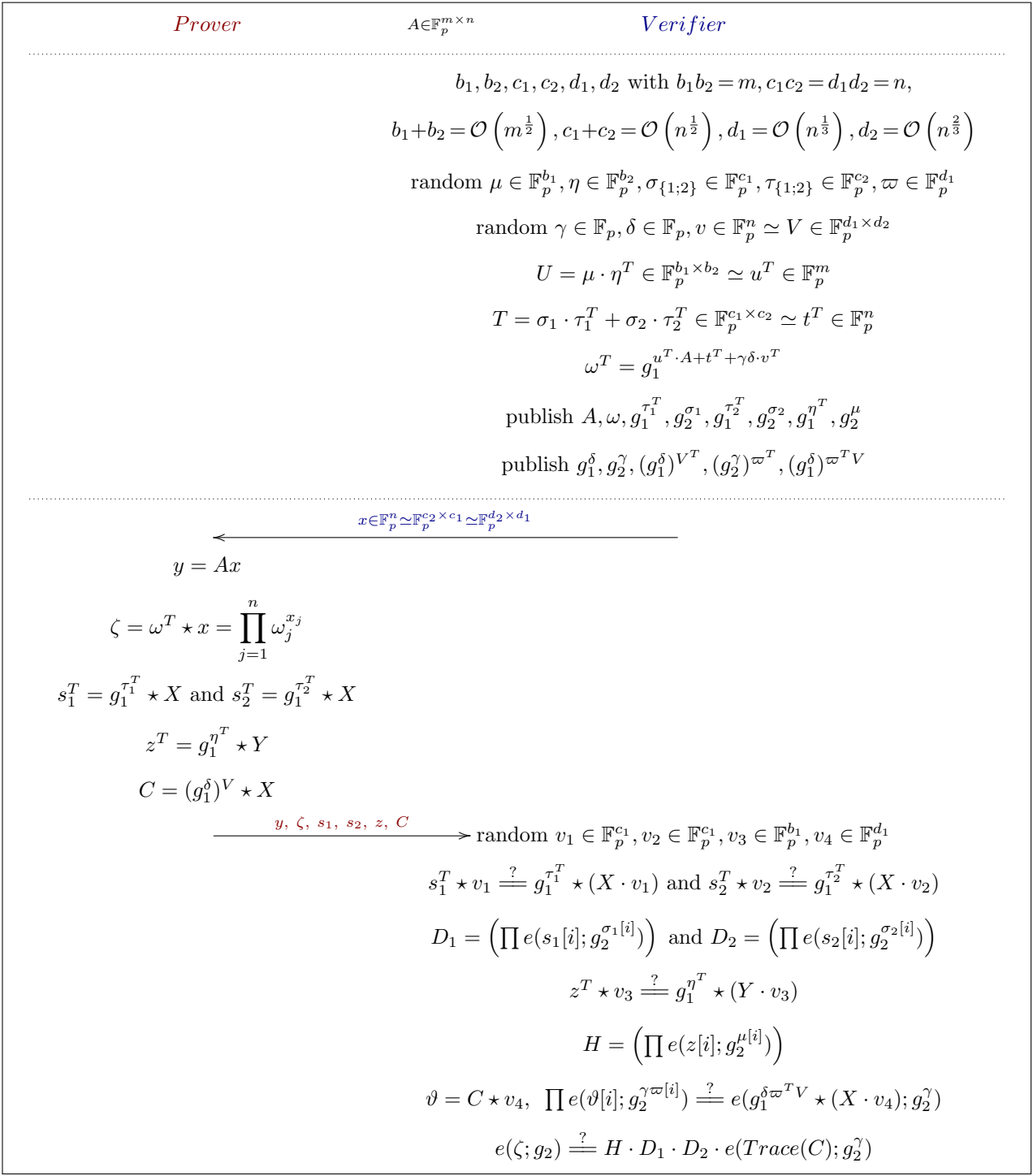}
	\caption{Exchanges in the proven publicly delegatable protocol with negligible cryptographic operations of Figure~\ref{fig:matrix-vector}.}\label{fig:provprot}
\end{figure}

\setcounter{oldtheorem}{\thetheorem}
\setcounterref{theorem}{thm:provenproto}
\addtocounter{theorem}{-1}
\begin{theorem}
  Let $A\in\F_p^{m{\times}n}$ whose matrix-vector
  products costs $\mu(A)$ arithmetic operations.
  Protocol~\ref{fig:matrix-vector} is sound under the \mbox{co-CDH} assumption. It is
  also perfectly complete and its number of performed operations is bounded as
  follows: 
\begin{center}\sigsmall
\begin{tabular}{|c||c|c|c|}
\hline
& Preparation  & Prover  & Verifier \\
\hline
$\F_p$  &  $\mu(A){+}\bigO{m{+}n}$  &  $\mu(A)$  &  $\bigO{m{+}n}$ \\
$\G_i$  &  $\bigO{m{+}n}$  &  $\bigO{m{+}n^{4/3}}$  &  $\bigO{\sqrt{m}{+}n^{2/3}}$ \\
Pairings &  0 &  0  &  $\bigO{\sqrt{m}{+}\sqrt{n}}$  \\
\hline
\end{tabular} 
\end{center}
\end{theorem}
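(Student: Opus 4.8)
The plan is to prove the three assertions — perfect completeness, soundness under \mbox{co-CDH}, and the complexity bounds — separately, reusing Lemmas~\ref{lem:rk1dp} and~\ref{lem:gendp} for the dot-product pieces and the reduction idea of Theorem~\ref{thm:sparsecdh} for the cryptographic core. For completeness I would check the four families of verifications in turn and show each reproduces the intended scalar in the exponent. By the rank-$1$ trace identity of Equation~(\ref{eq:tracerk1}), the honest $z^T=g_1^{\eta^T}\star Y$ gives $H=\prod_i e(z[i];g_2^{\mu[i]})=e(g_1;g_2)^{\eta^T Y\mu}=e(g_1;g_2)^{u^Ty}$; the two rank-$1$ pieces of $t$ give $D_1D_2=e(g_1;g_2)^{\tau_1^TX\rho_1+\tau_2^TX\rho_2}=e(g_1;g_2)^{t^Tx}$; and for the hidden term, $\mathrm{Trace}(C)=g_1^{\delta\,\mathrm{Trace}(VX)}=g_1^{\delta v^Tx}$, whence $e(\mathrm{Trace}(C);g_2^\gamma)=e(g_1;g_2)^{\gamma\delta v^Tx}$. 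The interleaved Freivalds equalities hold identically on honest inputs, so the final test reduces to $e(g_1;g_2)^{(u^TA+t^T+\gamma\delta v^T)x}$ on both sides once $y=Ax$, i.e. to $e(\zeta;g_2)$; bilinearity closes the argument.

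For the complexity bounds I would simply aggregate the per-party counts of the sub-protocols. The product $u^Ty$ is verified by the rank-$1$ protocol of Figure~\ref{fig:rk1dp} with the $b_1{\times}b_2$ vectorization, $t^Tx$ by two copies of it with the $c_1{\times}c_2$ vectorization (Lemma~\ref{lem:rk1dp}), and the hidden $\gamma\delta v^Tx$ by the general protocol of Figure~\ref{fig:gendp} with $d_1=\bigO{n^{1/3}}$, $d_2=\bigO{n^{2/3}}$ (Lemma~\ref{lem:gendp}). Adding the single base-field product $\mu(A)$ and the $\bigO{m+n}$ field work for the vectorizations and Freivalds probes yields each column of the table: the Prover's $\bigO{n^{4/3}}$ group cost is the $\bigO{d_1^2d_2}=\bigO{d_1\cdot n}$ of forming $C$, the Verifier's $\bigO{n^{2/3}}$ group cost is the $\bigO{d_1^2}$ probe there, and the pairing count $\bigO{\sqrt m+\sqrt n}$ collects the $\bigO{b_1}$, $\bigO{c_1}$ and $\bigO{d_1}$ pairings of the three pieces.

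The real work is soundness, which I would split into a Freivalds stage and a \mbox{co-CDH} reduction. Since $x$ (hence $X$) and the claimed $y'$ (hence $Y$) are known to the Verifier, the three interleaved Freivalds checks force, except with probability $\bigO{1/p}$ each (union bound over a constant number of checks), the helper values to their intended quantities: $D_1D_2=e(g_1;g_2)^{t^Tx}$, $H=e(g_1;g_2)^{u^Ty'}$, and $\varpi^TC_{\exp}=\delta\varpi^T VX$. The final test then collapses to $\zeta'=g_1^{u^Ty'+t^Tx+\gamma\delta v^Tx+\gamma\varepsilon}$ with $\varepsilon=\mathrm{Trace}(C_{\exp})-\delta v^Tx$ the only residual slack, and subtracting the honestly recomputable $\zeta=\omega^T\star x=g_1^{u^TAx+t^Tx+\gamma\delta v^Tx}$ shows a forgery yields $\zeta'/\zeta=g_1^{u^T(y'-Ax)+\gamma\varepsilon}$ with $y'\neq Ax$. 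The reduction then plants a \mbox{co-CDH} instance $(g_1^c,g_2^d)$ inside the masking exactly as in Theorem~\ref{thm:sparsecdh}: the component $\gamma\delta v^T$, whose preimage is hidden from both parties (only $g_1^{\delta V}$, $g_1^{\delta\varpi^TV}$, $g_2^{\gamma\varpi}$, $g_2^\gamma$ are published), lets the simulator pick $\omega$ embedding the challenge while cancelling the structured exponents $u^TA+t^T$. Indistinguishability of this simulated $\omega$ from the real one rests on pseudorandomness of $g_1^{u^TA+t^T+\gamma\delta v^T}$, which is exactly why $U=\mu\eta^T$ has rank $1$ and $T=\rho_1\tau_1^T+\rho_2\tau_2^T$ has the two-term rank form of \cite[Theorem~3]{Fiore:2012:PVD}; recovering $g_1^{u^T(y'-Ax)+\gamma\varepsilon}$ with nonzero exponent then exposes $g_1^{cd}$.

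The main obstacle I anticipate is this two-part extraction. One must show both cheating modes collapse to \mbox{co-CDH}: the output cheat ($\varepsilon=0$, $y'\neq Ax$), handled by the $u$-embedding after normalising $y'[1]-(Ax)[1]=1$ as in Theorem~\ref{thm:sparsecdh}; and the trace cheat ($\varepsilon\neq0$ through a maliciously formed $C$ whose projection $\varpi^TC_{\exp}$ still matches), which must be excluded because fabricating $g_1^{\gamma\varepsilon}$ with $\gamma$ available only in $\G_2$ is itself infeasible under the assumption. Arguing rigorously that the $\varpi$-check leaves no exploitable diagonal slack, and that the hybrid replacing the structured masking by a uniform vector is sound, is where the delicate part of the proof will lie.
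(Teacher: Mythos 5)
Your completeness verification and your complexity aggregation coincide with the paper's own proof (which indeed just invokes Equation~(\ref{eq:tracerk1}) and sums the costs of the sub-protocols of Figures~\ref{fig:rk1dp} and~\ref{fig:gendp}), and your soundness skeleton --- a Freivalds stage forcing the helper values, then a \mbox{co-CDH} reduction that hides the challenge in the masking and extracts it from $\zeta'/\zeta$ --- is also the paper's plan. The genuine gap is in the step you defer to Theorem~\ref{thm:sparsecdh}: the challenge \emph{cannot} be planted ``exactly as in Theorem~\ref{thm:sparsecdh}''. In that interactive protocol, $u$ enters only through quantities the reduction simulates in $\G_T$ (the vector $z^T=[a,e(1;1),\dots,e(1;1)]$ and the Trustee's answer), so the sparse embedding $u^T=[cd,0,\dots,0]$ with the normalisation $y'[1]-y[1]=1$ is available. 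In Figure~\ref{fig:matrix-vector}, by contrast, the rank-one factors of $u$ are \emph{published} as the keys $g_1^{\eta^T}$ and $g_2^{\mu}$, which the Prover and Verifier exponentiate with; embedding $cd$ into one coordinate of the vectorized $u\simeq\mu\cdot\eta^T$ would force the simulator to publish a group element whose exponent is $cd$ --- the very \mbox{co-CDH} target. The paper's reduction instead splits the challenge $(L,R)=(g_1^{a},g_2^{b})$ multiplicatively across the two groups, which is exactly what the rank-one structure buys: it sets $g_1^{\eta}=L^{\eta_0}$, $g_2^{\mu}=R^{\mu_0}$ (likewise for $\tau_i,\rho_i$, and $g_1^{\delta}=L$), so the implicit masks are the \emph{dense} vectors $u=ab\,u_0$ and $t=ab\,t_0$ with $u_0,t_0$ known to the simulator; your cancellation idea is then realized by choosing $v=-(A^Tu_0+t_0)$, which makes $\omega^T=L^{rv^T}$ computable, and the extraction computes $(\zeta/\zeta')^{c}=g_1^{ab}$ with $c\equiv(u_0^T(y'-y))^{-1}\bmod p$ rather than normalising a single coordinate. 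Without this splitting the simulator cannot even produce $EK_f$ and $VK_f$, so the reduction as you sketch it cannot be executed.

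On your second worry, the ``trace cheat'' $\varepsilon\neq 0$, your caution is warranted and in fact your accounting is sharper than the paper's, whose proof asserts that $C$ is forced ``directly and independently by the Freivalds first four checks'' although the fourth check only pins down the projection $\varpi^T C$ in the exponents. Closing this case requires the argument you gesture at, made precise: a nonzero deviation $E$ with $\varpi^T E=0$ amounts to exhibiting a nontrivial discrete-logarithm relation among the random published elements $g_2^{\gamma\varpi[i]}$, which is itself infeasible under the hardness assumption; this case analysis must be an explicit branch of the final proof, since otherwise the quantity $\zeta'/\zeta$ carries the unknown extra term $\gamma\varepsilon$ and $g_1^{ab}$ can no longer be recovered.
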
\setcounter{theorem}{\theoldtheorem}
\begin{proof}
  Completeness stems again directly from Equation~(\ref{eq:tracerk1}).
  
  For the complexity bounds, we fix $b_1b_2\geq{}m$ and $b_1+b_2=
  \bigTheta{\sqrt{m}}$ (usually, pairing operations are costlier than
  group operations, therefore a good practice could be to take $b_1<b_2$ and we,
  for instance, often have used $b_2=100b_1$ with $b_1b_2\approx{} m$ which gave
  us a speed-up by a factor of~$5$),
  $c_1c_2\geq{}n$ and $c_1+c_2= \bigTheta{\sqrt{n}}$, and finally
  $d_1=\bigO{m^{1/3}}$ and $d_2=\bigO{m^{2/3}}$.
  For the Prover, we then have that $y$ obtained in $\mu(A)$ operations; 
  $\zeta$ in $\bigO{n}$; $s_1$, $s_2$ and $z$ computations are bounded by
  $\bigO{n+m}$ where $C$ thus requires $\bigO{n^{4/3}}$ operations.
  The cost for the preparation is $\bigO{m+n}$ for $U$, $T$ and $\varpi^TV$.
  $\omega$ requires $\mu(A)+2m$ classical operations and $\bigO{m}$ group
  operations. $(g_1^{\delta})^{V^T}$ requires $\bigO{n}$ group operations while
  $g_1^{\tau_i}$, $g_1^{\eta_i}$, $g_2^{\rho_i}$, and $g_2^{\mu_i}$,  require
  $\bigTheta{\sqrt{m}+\sqrt{n}}$ operations, more than for
  $(g_2^{\gamma})^{\varpi^T}$ and $(g_1^{\delta})^{\varpi^TV}$. 
  The complexity for the Verifier is then dominated by $\bigO{n^{2/3}}$ operations
  to check~$C$, $\bigO{n}$ classical operations for $Y\cdot{}v_3$ and
  $\bigO{\sqrt{n}}$ pairing operations.
  
  Finally for the soundness, assume that there is an adversary $\mathcal{A}$ that breaks the soundness of our protocol with non-negligible advantage $\epsilon$ for a matrix $A\in\mathbb{F}_p^{m\times n}$. In the following we will prove how an adversary $\mathcal{B}$ can use adversary $\mathcal{A}$ to break the \mbox{co-CDH} assumption with non-negligible advantage $\epsilon'\simeq\epsilon$.
  Let assume that $\mathcal{B}$ was given a \mbox{co-CDH} sample $(L=g_1^a,R=g_2^b)$.  
  First $\mathcal{B}$ simulates the soundness experiment to adversary $\mathcal{A}$ in the following manner:
  when $\mathcal{A}$ calls the oracle $\mathcal{O}_{KeyGen}$, adversary $\mathcal{B}$ first chooses integers, $b_1$, $b_2$, $c_1$, $c_2$, $d_1$, and $d_2$ such that $m=b_1b_2$ and $n=b_1b_2=d_1d_2$. Then it generates random vectors
  $\mu_0\in\F_p^{b_1},\eta_0\in\F_p^{b_2},\rho_{01}\in\F_p^{c_1},\tau_{01}\in\F_p^{c_2},\rho_{02}\in\F_p^{c_1},\tau_{02}\in\F_p^{c_2}$, $\varpi\in\F_p^{d_1}$ 
  and a value $r\in\F_p$.
  We let $u_0$ be the vector representation of $\mu_0\cdot{}\eta_0^T$ and $t_0$
  that of $\rho_{01}\cdot{}\tau_{01}^T+\rho_{02}\cdot{}\tau_{02}^T$. 
  We also let $v=-(A^Tu_0+t_0)\in\F_p^n$.
  Finally, $\mathcal{B}$ forms 
  $\omega^T=L^{r\cdot{}v^T}$;
  $g_1^\eta=L^{\eta_0}$, $g_2^\mu=R^{\mu_0}$;
  $g_1^{\tau_1}=L^{\tau_{01}}$, $g_2^{\rho_1}=R^{\rho_{01}}$;
  $g_1^{\tau_2}=L^{\tau_{02}}$, $g_2^{\rho_2}=R^{\rho_{02}}$;
  $g_1^\delta = L$, $g_2^\gamma=g_2^r$;
  $g_1^{\delta V}=L^V$, $(g_2^r)^{\varpi^T} = (g_2^\gamma)^{\varpi^T}$ and $(g_1^{\delta})^{\varpi^T*V} = (g_1^\gamma)^{\varpi^TV}$ and outputs:
  
$\texttt{param} = (m,n,b_1,b_2,c_1,c_2,d_1,d_2,p,e, \mathbb{G}_1, \mathbb{G}_2, \mathbb{G}_T,g_1,g_2,g_T)$.
  
  $EK_f=\{A,\omega^T,(g_1^{\tau_1^T}),(g_1^{\tau_2^T}),(g_1^{\eta^T}),(g_1^{\delta\cdot V})\}$

$VK_f=\{(g_1^{\tau_1^T}),(g_1^{\tau_2^T}),(g_2^{\rho_1}),(g_2^{\rho_2}),(g_1^{\eta^T}),(g_2^\mu),(g_1^{\delta\varpi^T\cdot V}),(g_2^{\gamma\varpi}),g_2^\gamma\}$.
  
  Thanks to the randomness and the decisional Diffie-Hellman assumption (DDH) in
  each group $G_i$, as well as \cite[Theorem~3]{Fiore:2012:PVD} for $\omega^T =
  (L^r)^{v^T}$, these public values are indistinguishable from randomly generated
  inputs.   
  Further, we have
  $\omega^T=g_1^{arv^T}=g_1^{ab(u_0^TA+t_0^T+v^T)+arv^T}=g_1^{abu_0^TA+abt_0^T+a(b+r)v^T}$.

  When adversary $\mathcal{A}$ calls the oracle $\mathcal{O}_{ProbGen}$ on input $x$, adversary $\mathcal{B}$ returns $\sigma_x = x$.
  Therefore, if $y=Ax$ and $\zeta=\omega^T\star{}x$, then
  the verification will pass:
  indeed the first two checks will ensure that 
  $s_1^T=g_1^{\tau_1^T}\star{}X$ and $s_2^T=g_1^{\tau_2^T}\star{}X$ when the third
  check ensures that $z^T=g_1^{\eta^T}\star{}Y$. 
  This shows that:
  $$H=\left(\prod{} e(z[i];g_1^{\mu[i]})\right)=e(g_1;g_2)^{abu_0^Ty},$$
  and that:
  $$D_j=\left(\prod e(s_j[i];g_2^{\rho_j[i]})\right)~\text{for}~j=1,2.$$
  Finally, the last check is that these two parts, as well as the last one, which
  is 
  $e(g_1^\delta;g_2^\gamma)^{v^T\cdot{}x}=e(g_1;g_2)^{a(b+r)v^T\cdot{}x}=
  e(Trace(g_1^{\delta\cdot{}VX});g_2^\gamma)$,
  are coherent with the definitions of $\omega$ and $\zeta$ above.
  Now, with a non-negligible probability $\epsilon$, adversary $\mathcal{A}$ can pass the check for another $y'\neq{}y$, by
  providing an adequate~$\zeta'$.
  First, $z^T$, $s_1^T$, $s_2^T$ and $C$ must be correct, as they are
  checked directly and independently by the Freivalds first four checks. 
  Second, we have that
  $e(\zeta';g_2)=e(g_1;g_2)^{abu_0^Ty'+abt_0^Tx}e(g_1^\delta;g_2^\gamma)^{v^T\cdot{}x}$
  and therefore, we must also have
  $e(\zeta(\zeta')^{-1};g_2)=e(g_1;g_2)^{abu_0^T(y-y')}$. 
  As $u_0$ is a secret unknown to adversary $\mathcal{A}$, for a random $y'$ the probability
  that $u_0^T(y-y')=0$ is bounded by $1/|\G_1|$ and thus negligible. 
  Thus adversary $\mathcal{B}$ can compute $c \equiv\left(u_0^T(y'-y)\right)^{-1} \mod |\G_1|$ and
  $(\zeta/\zeta')^c=g_1^{ab}$. Therefore it breaks the \mbox{co-CDH} assumption
  with non-negligible probability $\epsilon'\simeq \epsilon$. The only other
  possibility is that adversary $\mathcal{A}$ was able to recover $u_0^T$. But
  that would directly implies that it has an advantage in the \mbox{co-CDH}: $g_1^\eta=L^{\eta_0}$, $g_2^\mu=R^{\mu_0}$.
\end{proof}

In Table~\ref{tab:speedup}, we present more timings for the comparison between
our protocol and, to our knowledge and according to Table~\ref{tab:expes}, the
best previously known from~\cite{Elkhiyaoui:2016:ETPV}.  
The associated speed-ups supports our claim of a {\em Prover efficient} protocol
with a gain of two orders of magnitude. 
\begin{table}[htb]\centering
\caption{Speed-up of our novel Protocol over a 256-bit finite
  field on a i7 @3.4GHz.}\label{tab:speedup}
\begin{tabular}{|l||r|r|r|r|r|r|r|r|r|r|r|r|r|}
\hline
Size  & 100  & 200  & 500  & 1000  & 2000  & 3000  & 4000  \\
\hline
 \cite{Elkhiyaoui:2016:ETPV}   &  2.77s  &  10.93s  &  67.93s  &  271.03s  &  1079.71s  &  2430.05s  &  4329.46s  \\
 Fig.~\ref{fig:provprot}  &  0.17s  &  0.34s  &  0.98s  &  2.26s  &  5.37s  &  9.16s  &  13.76s  \\
\hline
 Speed-up  &  17  &  32  &  69  &  120  &  201  &  265  &  315  \\
\hline
\end{tabular}

\medskip

\begin{tabular}{|l||r|r|r|r|r|r|r|r|r|r|r|r|r|}
\hline
Size  & 5000  & 6000  & 7000  & 8000  & 9000  & 10000  \\
\hline
 \cite{Elkhiyaoui:2016:ETPV}   &  6790.15s  &  9780.24s  &  13309.61s  &  17416.38s  &  22002.51s  &  27175.12s  \\
 Fig.~\ref{fig:provprot}  &  18.55s  &  24.03s  &  29.93s  &  37.00s  &  44.00s  &  51.97s  \\
\hline
 Speed-up  &  366  &  407  &  445  &  471  &  500  &  523  \\
\hline
\end{tabular}
\end{table}




\end{document}